\definecolor{myred}{RGB}{236, 17, 0}
\definecolor{myorange}{RGB}{236, 137, 0}
\newcommand{\ketbra}[1]{\ket{#1}\bra{#1}}
\newcommand{\id}{\mathds{1}}
\newtheorem{Proposition}{Proposition}
\begin{document}

\title{Complete Self-Testing of a System of Remote Superconducting Qubits}


\author{Simon Storz}
 \email{simon.storz@phys.ethz.ch}
 \affiliation{Department of Physics, ETH Zurich, 8093 Zurich, Switzerland}
 \affiliation{Quantum Center, ETH Zurich, 8093 Zurich, Switzerland}

\author{Anatoly Kulikov}
 \affiliation{Department of Physics, ETH Zurich, 8093 Zurich, Switzerland}
 \affiliation{Quantum Center, ETH Zurich, 8093 Zurich, Switzerland}
 
\author{Josua D. Schär}
 \affiliation{Department of Physics, ETH Zurich, 8093 Zurich, Switzerland}
 \affiliation{Quantum Center, ETH Zurich, 8093 Zurich, Switzerland}
 
\author{Victor Barizien}
 \affiliation{Institute of Theoretical Physics, University of Paris-Saclay, CEA, CNRS, Gif-sur-Yvette, France}

\author{Xavier Valcarce}
 \affiliation{Institute of Theoretical Physics, University of Paris-Saclay, CEA, CNRS, Gif-sur-Yvette, France}

\author{Florence Berterottière}
 \affiliation{Department of Physics, ETH Zurich, 8093 Zurich, Switzerland}
 \affiliation{Quantum Center, ETH Zurich, 8093 Zurich, Switzerland}

\author{Nicolas Sangouard}
 \affiliation{Institute of Theoretical Physics, University of Paris-Saclay, CEA, CNRS, Gif-sur-Yvette, France}

\author{Jean-Daniel Bancal}
 \affiliation{Institute of Theoretical Physics, University of Paris-Saclay, CEA, CNRS, Gif-sur-Yvette, France}

\author{Andreas Wallraff}
 \affiliation{Department of Physics, ETH Zurich, 8093 Zurich, Switzerland}
 \affiliation{Quantum Center, ETH Zurich, 8093 Zurich, Switzerland}

\date{\today}

\begin{abstract}
Self-testing protocols enable the certification of quantum systems in a device-independent manner, i.e.~without knowledge of the inner workings of the quantum devices under test. Here, we demonstrate this high standard for characterization routines with superconducting circuits, a prime platform for building large-scale quantum computing systems. We first develop the missing theory allowing for the self-testing of Pauli measurements. We then self-test Bell pair generation \textit{and} measurements at the same time, performing a \textit{complete} self-test in a system composed of two entangled superconducting circuits operated at a separation of 30 meters. In an experiment based on 17 million trials, we measure an average CHSH (Clauser-Horne-Shimony-Holt) S-value of 2.236. Without relying on additional assumptions on the experimental setup, we certify an average Bell state fidelity of at least $58.9\%$ and an average measurement fidelity of at least $89.5\%$ in a device-independent manner, both with 99$\%$ confidence. This enables applications in the field of distributed quantum computing and communication with superconducting circuits, such as delegated quantum computing.
\end{abstract}

\maketitle

Many everyday applications rely on testing routines to verify the proper functioning of the system. This is particularly true for the development of information processing systems \cite{Agrawal1994}. With the maturing of quantum information processing units based on superconducting circuits \cite{Krinner2022,Acharya2023,Arute2019,Kim2023b}, neutral atoms \cite{Bluvstein2023} or trapped ions \cite{Moses2023}, verifying the correct operation of such systems has become crucial as well.

Preferably, the verification process is based on a minimal set of assumptions about the inner workings of the device under test. The growing complexity of quantum systems however imposes substantial challenges in verifying the correct working of the implemented operations. Most conventional verification schemes have the shortcoming that they certify specific aspects of a quantum system while making detailed assumptions about the correct working of other critical parts of the setup. This is not ideal from a security point of view. 
Self-testing allows one to certify quantum systems in a \emph{device-independent} way, i.e.~free of any assumption about the inner workings of both the tested device and any additional quantum devices used during the test~\cite{Mayers1998,Mayers2004,Acin2007b}, see Refs.~\cite{Scarani2012, Supic2020} for a review. Beyond the verification of elementary quantum operations such as quantum state preparation~\cite{Coladangelo2017}, measurements and gates~\cite{Wagner20,Magniez2006,Sekatski2018}, self-testing also plays a key role in certifying quantum protocols such as delegated quantum computing~\cite{Fitzsimons2017,Gheorghiu2018}.

Self-testing relies on quantum non-locality \cite{Brunner2014}, asserted through the violation of a Bell inequality~\cite{Bell1964, Bell2004} between at least two quantum information processing nodes, see Fig.~\ref{fig:selftest_scheme}a. As for a Bell test, the protocol repeats $n$ identical trials, where in each trial a verifier feeds the untrusted device under test with input bits $x$ and $y$ of their choice, and records the corresponding measurement outcomes $a$ and $b$. The core idea of self-testing is that observing a large Bell inequality violation implies that the system at hand operates in a manner close to an ideal quantum model. As such, self-testing relies on correlations between quantum systems which cannot be explained by any classical model. Self-testing thereby provides a fundamentally different validation from conventional methods, not possible with classical means only.

Since classical models with post-processing can mimic quantum correlations~\cite{Larsson2014}, the detection loophole must be closed in order to run a self-testing protocol, so that the fair-sampling assumption does not have to be employed. Similarly, the nodes should not be able to exchange classical information during the test, so that classical models leveraging communication cannot constitute an explanation for the observed Bell inequality violation~\cite{Toner2003}. This is also known as closing the locality loophole. When these loopholes are closed, it becomes possible to infer the quality of the quantum state shared between the remote qubits in a black-box manner. This means that the certification can be done solely from the correlations observed in each trial between the measurement settings chosen by the user and the corresponding measurement outcomes~\cite{Supic2020}, and therefore without assuming any specific description of the device.

\begin{figure*}
    \includegraphics[width=\linewidth]{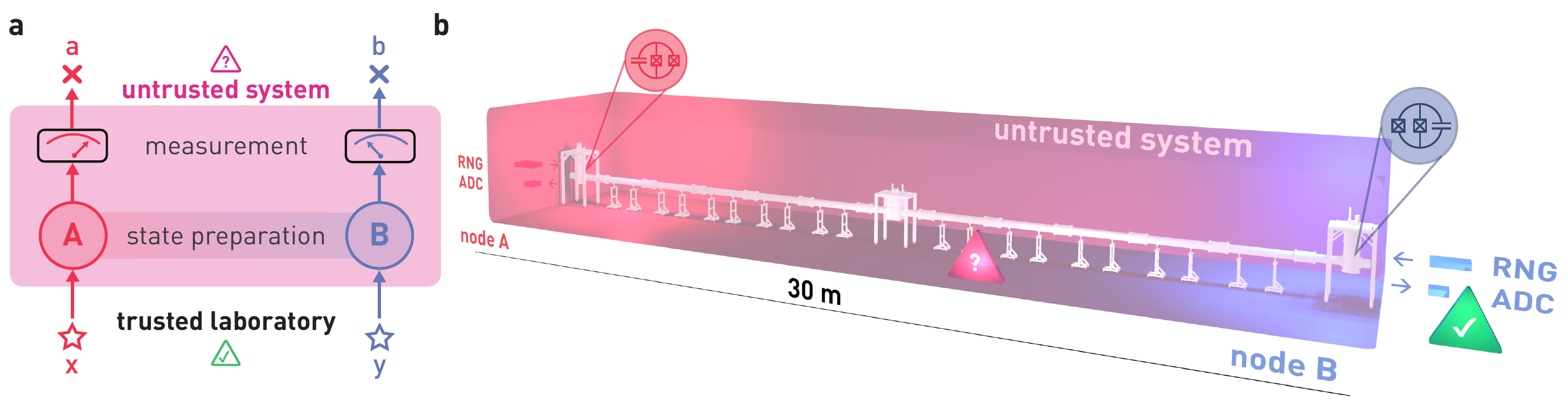} 
    \caption{\textbf{Experimental setup.} a, Schematic representation of a self-testing scheme with space-like separation. In each trial, A and B choose the measurement bases x and y using trusted random number generators, and the two parties record the measurement outcomes a and b. These start and stop events at A and B are space-like separated to prohibit classical communication among the nodes in each trial, closing the locality loophole. The untrusted setup (magenta) consists of all devices relevant for generating a Bell state and for performing a Bell test. This particularly includes the quantum channel between the nodes and all components relevant for the measurement process up to the moment $t_\times$ when the measurement result is said to be fully determined (see Appendix~\ref{app:timing}). b, Schematic illustration of the experimental setup. At the two nodes A and B each, a superconducting circuit qubit is operated in a dilution refrigerator. The two nodes are remotely connected through a 30-meter-long microwave waveguide to generate entanglement between the two qubits, see main text. Each node operates a trusted random number generator (RNG) and measurement signal detection device (analog-to-digital converter, ADC). The untrusted setup, represented in a magenta box, is marked with a magenta triangle, the trusted part with a green triangle.}
    \label{fig:selftest_scheme}
\end{figure*}

Most noise-tolerant self-testing protocols known today impose more stringent requirements on the experimental setup than is required to perform a Bell test. Specifically, the most noise-robust analysis known to date \cite{Kaniewski2016} requires a CHSH \cite{CHSH1969} S-value larger than $S^\star=(16+14\sqrt{2})/17\simeq 2.106$, and a large number of Bell test trials $n\gtrsim10^7$ to avoid the conclusion to be affected by finite size effects of the statistical sample (Appendix~\ref{app:statistics}). Moreover, closing the detection loophole and ensuring no communication between the nodes typically further increases the complexity of the experimental setup and constrains the quality of the relevant metrics, e.g. through limitations on the duration of the measurement process.

Starting in 2018, the first proof-of-principle self-tests were demonstrated, all relying on the fair sampling assumption. Specifically, the self-testing relation between Bell inequality violation and state fidelity was first used in an experiment with bipartite and tripartite states of optical photons~\cite{Zhang2018p}. Multidimensional entangled photonic states were then investigated with a table-top~\cite{Zhang2019r} and an integrated optics system~\cite{Wang2018o}. Table-top optical setups were also used to study partially entangled states of photons and to compare the self-testing fidelity bounds to device-dependent ones~\cite{Gomez2019,Goh2019}. Furthermore, sharing two copies of a quantum state in a bipartite and tripartite optical system was studied~\cite{Agresti2021}, and self-testing for photonic graph states was explored~\cite{Wu2021a,Xu2022f}.

The detection loophole was first closed in a self-testing experiment using two ions in a single trap~\cite{Tan2017}. Four-qubit GHZ states on a single superconducting circuit device later also closed the detection loophole~\cite{Wu2022a}. In all these cases, ensuring the absence of communication between the nodes relied on physical isolation. The latter work \cite{Wu2022a} also included a photonic experiment where fair sampling is assumed but the absence of communication is supported by space-like separation. The only self-testing analysis so far including both space-like separation of the nodes and closing the detection loophole was obtained in the aposteriori analysis presented in Ref.~\cite{Bancal2021} of the loophole-free Bell test in Ref.~\cite{Rosenfeld2017} using neutral atoms. However, because of the low repetition rates of that experimental setup (about 30~mHz), this analysis required to use data captured over the course of several months. This is inconvenient for practical purposes where self-testing is considered as a subroutine of a quantum computing algorithm with a total run time on the order of minutes to hours.

Furthermore, only the self-testing of quantum states has been demonstrated experimentally so far. However, the certification of \textit{both} the states and the measurements~\cite{Guhne2023} through self-testing is crucial to certify the blindness and verifiability of delegated quantum computing with entangled provers~\cite{Gheorghiu2015, Hajudesk2015}.
Reasons for not trusting the measurement apparatus include potential malfunctions, miscalibrations, biased or tampered results, or the device classically emulating a quantum measurement. 

In this article, we present a \emph{complete self-test} of a system of two superconducting circuits remotely entangled through a 30-meter-long microwave quantum link. We refer to the procedure as \textit{complete} since it self-tests at the same time a Bell state between the two distant circuits, and a pair of Pauli measurements within each device.
Our demonstration closes the detection loophole and ensures the absence of communication between the nodes through space-like separation. Enabled by a unique combination of high Bell inequality violations and high data acquisition rates (Appendix~\ref{app:literature}), our experiment achieves the targeted goal in a total time of only about half an hour. This makes it useful for practical implementations.
Our experiments demonstrate the utility of this verification scheme in local area networks of remote superconducting quantum processors, with prospects in secure communication and distributed computing. 

\section*{Experimental System and Protocol}

Our two-node system (Fig.~\ref{fig:selftest_scheme}b) consists of two transmon-style superconducting qubits operated in their individual dilution refrigerators at temperatures of about 15~mK, as described in earlier reports \cite{Magnard2020, Storz2023}. We control and read out the state of the qubits using microwave pulses on the nanosecond time scale. The two qubits are connected through a 30-meter-long cooled microwave quantum channel, a rectangular waveguide, used for the distribution of entanglement \cite{Kurpiers2017, Kurpiers2018}. The space-like separation of the two qubits at a linear distance of about 30~meters provides a window of about 100~ns for each trial of the experiment, from the generation of the input bits $x$ and $y$ to obtaining the measurement outcomes $a$ and $b$, during which the exchange of information between the two nodes is precluded by the laws of special relativity (Fig.~\ref{fig:protocol}). This ensures that no communication occurs between the nodes during the experimental trial, closing the locality loophole. 
\begin{figure}
  \begin{center} \includegraphics[width=1\columnwidth]{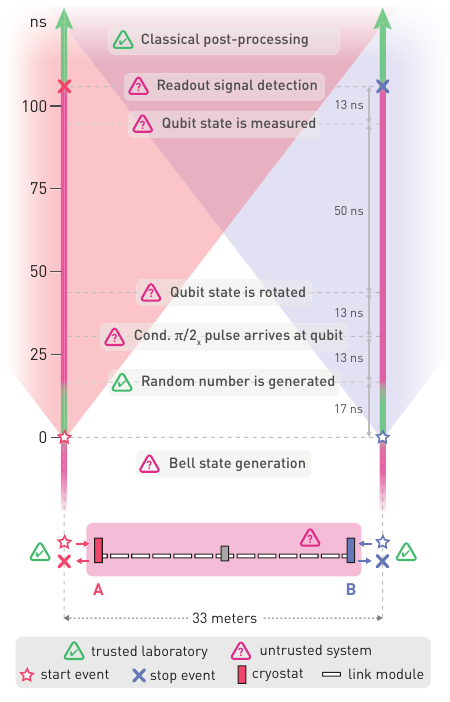}
  \caption{\textbf{Self-testing protocol.} Space-time configuration of a single trial of the experiment. The colors and triangles indicate whether the step of the trial, or part of the setup, belongs to the trusted (green) or untrusted (magenta) part. The individual steps take place simultaneously at both nodes. Stars and crosses mark the start and stop events of each trial, and the shaded red and blue regions indicate the forward light cones of the corresponding start events.}
  \label{fig:protocol}
  \end{center}
\end{figure}

In each trial of the experiment (Fig.~\ref{fig:protocol}), we first establish entanglement between the two remote qubits by exchanging a single microwave photon through the quantum channel \cite{Kurpiers2018}, yielding a (readout-error-corrected) Bell state fidelity of 85.9$\%$, as characterized by quantum state tomography. Once the two qubits are entangled, we randomly choose the measurement bases of both qubits (Fig.~\ref{fig:protocol}). Those are determined by bits provided by trusted random number generators \cite{Abellan2015} located close to each node of the untrusted setup, to support the assumption of measurement independence \cite{Hall2010}. At each site, the local random bit controls the untrusted rotation of the qubit state through the conditional application of a microwave pulse, which implements the basis choice selection. Finally, we perform an untrusted single-shot measurement of the state of the two individual qubits \cite{Walter2017} (Fig.~\ref{fig:protocol}) and record the result in a trusted memory. We record and analyze the result of all trials, which closes the detection loophole. 

With this setup, we reach CHSH S-values exceeding $S_{\mathrm{avg}}=2.2$, well above the aforementioned critical threshold $S_{\mathrm{avg}}>S^*$, and therefore sufficiently high for enabling a self-testing analysis. This marks a considerable improvement over previous work on the same setup (Ref.~\cite{Storz2023}, $S_{\mathrm{avg}}=2.074$), which we attribute to a reduction of the loss in the quantum channel, see Appendix~\ref{app:updates}.

In Bell-like experiments, the finite size of the statistical dataset typically leads to a substantial reduction of the S-value the experiment guarantees (Appendix~\ref{app:statistics}). In order not to be significantly affected by such finite-size effects of the statistical data set, we record more than 16 million trials. This constitutes more than a 10-fold increase over our previous work \cite{Storz2023}, enabled by an improved temporal stability of the calibrations of the experimental setup (Appendix~\ref{app:updates}).

\section*{Results}
For the data presented here, we ran $n$~=~$2^{24}$~=~16'777'216 trials of the experiment at a repetition rate of 50~kHz. After each $2^{20}$~=~1'048'576 trials, we recalibrated the single-shot assignment thresholds. In total, including software overhead, the experiment took about 40 minutes. The calibrations of the microwave pulses, timings and readout parameters performed before the experiment are summarized in Appendix~\ref{app:calibrations}. 

We find an average CHSH S-value of $S_{\mathrm{avg}}=2.236$ for the data taken in all $n$ trials (Figure~\ref{fig:S_vs_time}). 
In a fully device-independent context, the fidelity between the physical state used in a Bell-like experiment and the singlet state $\ket{\phi^+}=(\ket{00}+\ket{11})/\sqrt{2}$ can be lower bounded from the CHSH S-value using~\cite{Kaniewski2016}
\begin{equation}
    \mathcal{F}_{\text{s}}(S) \geq \frac{1}{2}+\frac{1}{2} \cdot \frac{S-S^*}{2 \sqrt{2}-S^*}.
    \label{eq:selftesting}
\end{equation}
This bound ensures the existence of two-dimensional Hilbert subspaces in A and B in which the state has a fidelity of at least $\mathcal{F}_\text{s}(S)$ with the singlet, independently of the dimension of the full physical state and of the measurement description (see Appendix~\ref{app:theory_main}). Taking into account finite size statistics on the S-value (see Appendix~\ref{app:statistics}), we can guarantee a minimal state fidelity of $\mathcal{F}_\text{s} = 58.9\%$ at a $99\%$ confidence level, in a device-independent manner.

In order to bound the measurement fidelity in the presence of noise or experimental imperfections, we model the measurement apparatus of A(B) as a device taking a setting choice $x (y)$ and a quantum state $\rho$ as input, and producing an output $a (b)$. We then compare it to the expected pair of Pauli measurements by choosing a suitable injection map~\cite{Sekatski2018}. In Appendix~\ref{app:theory_main} we show that this allows us to guarantee that the tested apparatus performs as the expected one up to a fidelity that is lower bounded by the S-value as
\begin{equation}
    \mathcal{F}_\mathrm{m} (S) \geq \frac{ \sqrt{2} \, S +4}{8}.
    \label{eq:fid_bound}
\end{equation}
In other words, the measurement fidelity must be at least $\mathcal{F}_\mathrm{m}(S)$ in order to observe a CHSH value $S$, even at unit entangled state fidelity, and even if the measurements act on different Hilbert spaces than expected. Taking finite size statistics into account, this bound allows us to device-independently verify a minimal measurement fidelity of $\mathcal{F}_{\mathrm{m}}$~=~89.5$\%$, also at a 99$\%$ confidence level.

For comparison, we note that doing quantum tomography of the state without readout error correction leads to a Bell state fidelity of $\mathcal{F}_{\text{s}}=83.9\%$. This procedure is not device-independent as it assumes a two-qubit model and trusted measurements. In Appendix~\ref{app:measurement}, we also estimate the fidelity of the measurement apparatus in a way that is not device-independent by combining the process fidelities of the one-qubit rotations and of the readout, leading to $\mathcal{F}_{\text{m}}\geq97.2\%$. The difference between these fidelities and the ones obtained in a device-independent fashion through self-testing can be explained by the distance of the observed CHSH value from the ideal CHSH value~\cite{Goh2019}.

\begin{figure}
  \begin{center} \includegraphics[width=1\columnwidth]{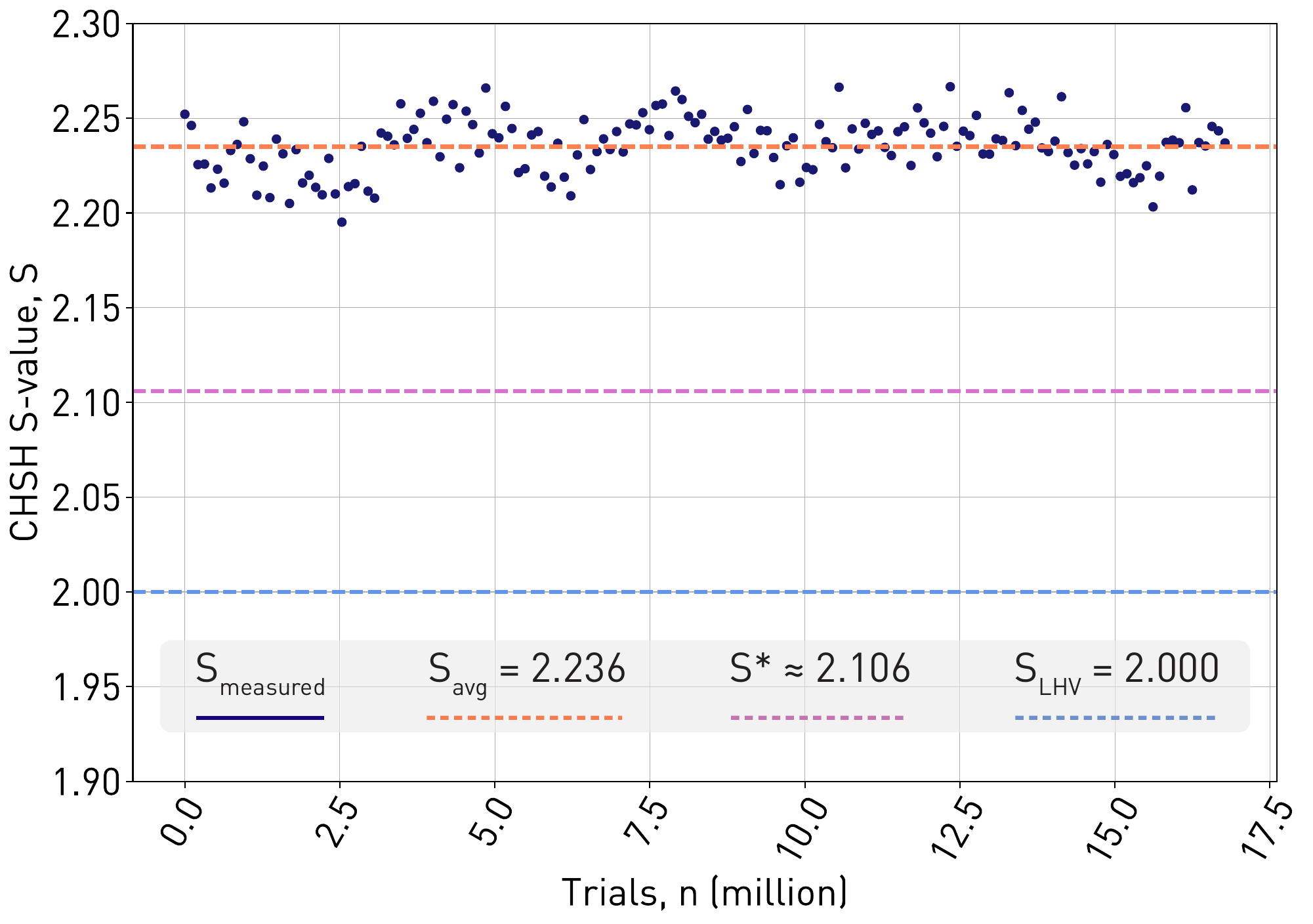}
  \caption{\textbf{Self-testing; evolution of the CHSH S-value during the experiment.} Recorded CHSH S-value during the full experiment, i.e.~all $n=2^{24}\approx16.7$~million trials. The blue dots $S_\mathrm{measured}$ show the experimental, average S-values of $n/128=2^{17}=131'072$ trials each. The dashed lines represent the average measured S-value $S_\mathrm{avg}$ of the full dataset, and the critical thresholds for self-testing the state ($S^*$) and the measurement ($S_\mathrm{LHV}$; bound imposed by local hidden-variable models).}
  \label{fig:S_vs_time}
  \end{center}
\end{figure} 

The device-independent nature of the experiment is maintained by evaluating each and every trial, which closes the detection loophole, and by not assuming the trials to be independent and identically distributed, addressing the memory loophole \cite{Larsson2014}. Furthermore, we close the locality loophole by asserting space-like separation of the start and stop events of each trial. To verify the latter, we performed precise measurements of the space-time configuration of our experimental setup (Appendix~\ref{app:timing}).

\section*{Discussion of Performance}

The presented experiment constitutes the first implementation of a complete self-testing that bounds the quality of a distributed entangled state and of the measurement apparatus at the same time. It also is the first self-testing implementation with superconducting circuits including space-like separation between the measurements.
The experiment further reaches previously unmatched fidelities certified in a device-independent manner (Appendix~\ref{app:literature}), i.e. without extra assumptions on the experimental setup, as in Ref.~\cite{Bancal2021}. Finally, it achieves this goal in a total run time of only tens of minutes.

The certifiable fidelities on this setup could be increased further by enhancing the observable S-value. This could be achieved by extending the lifetime and coherence time of the qubits, and by improving the fidelity of the distributed entangled state by lowering the photon loss in the quantum channel. One strategy in this regard is to only use superconducting materials for the interconnects between the quantum devices and the waveguide, as in \cite{Zhong2019}, or to use a heralded entanglement generation scheme \cite{Kurpiers2019} at the cost of a lower effective repetition rate of the experiment. Another approach is to increase the readout fidelity further, e.g.~through the addition of a more optimized device design \cite{Swiadek2023}, an on-chip tunable Purcell filter \cite{Sunada2024}, or a pre-arming of the readout resonator with photons \cite{Lledo2023}. Unit state and measurement fidelities can only be certified with a perfect S-value (Appendix~\ref{app:theory_main}).

We envision that, once the quality of interlinked networks between remote superconducting circuit qubits increases further, self-testing may serve as a standard certification routine to test the proper functioning of a system in a conceptually simple yet effective manner, either as an addition or an alternative to established certification routines such as randomized benchmarking \cite{Knill2008} or tomography methods \cite{Chuang1997}. The scheme can be particularly relevant in the context of blind quantum computing, where the experimental setup is not, or not fully trusted. Furthermore, self-testing is at the heart of other protocols relevant for certifying the security of quantum networks. Potential applications include device-independent quantum key distribution \cite{Nadlinger2022,Zhang22,Liu22}, conference key agreement \cite{Murta2020}, or delegated quantum computing \cite{Gheorghiu2019}.
\\

We thank Reto Schlatter for his assistance with the operation of the cryogenic setup, Raul Conchello Vendrell and Christoph Hellings for valuable contributions to the software framework, and Jean-Claude Besse and Kevin Reuer for helpful discussions. The work at ETH Zurich was funded by the European Union's Horizon 2020 FET-Open project SuperQuLAN (899354) and by ETH Zurich. The team in Paris-Saclay acknowledges funding by the European High-Performance Computing Joint Undertaking (JU) under grant agreement No 101018180 and project name HPCQS, the European Union’s Horizon Europe research and innovation program under the project “Quantum Security Networks Partnership” (QSNP, Grant Agreement No.~101114043) and by a French national quantum initiative managed by Agence Nationale de la Recherche in the framework of France 2030 with the reference ANR-22-PETQ-0009.

\appendix

\section{Effect of Finite Sample Size}
\label{app:statistics}
In this section, we discuss the influence of the size of the statistical dataset, i.e.~the number of trials $n$, on the certifiable fidelities in the experiment.

The certification schemes for the shared Bell state, detailed in Ref.~\cite{Kaniewski2016}, and for the measurement devices, discussed in Appendix~\ref{app:theory_main}, provide a mapping between the CHSH S-value and the certified fidelities, see Eqs.~\eqref{eq:selftesting} and \eqref{eq:fid_bound} in the main text. While these formulas are valid for all values of $S$, in practice the S-value that characterizes a setup is never known exactly, and it may even evolve during the course of an experiment, for example due to drifts in the calibration parameters. Therefore, a statistical analysis is needed to assess the S-value that can be supported by a finite number of experimental samples.

In order to account for potential parameter fluctuations along the course of the experiment, we assign a distinct S-value $S_i$ to each trial $i$. The average S-value over the course of the experiment is then given by
\begin{equation}
S_\mathrm{avg}=\frac{1}{n}\sum_i S_i.
\end{equation}
To take into account finite size effects, we are interested in computing a lower bound $S_{\mathrm{avg},\alpha,n}\leq S_\mathrm{avg}$ on the average S-value $S_\mathrm{avg}$ for all possible realizations $\{S_i\}_i$ of $n$ trials with probability at least $1-\alpha$, i.e.~with the guarantee that
\begin{equation}
P(S_{\mathrm{avg},\alpha,n} \leq S_\mathrm{avg}) \geq 1-\alpha.
\end{equation}
This defines a one-sided confidence interval on $S_\mathrm{avg}$. By the convexity of Eqs.~\eqref{eq:selftesting} and \eqref{eq:fid_bound} in the main text, this then implies one-sided confidence invervals with lower bounds $\mathcal{F}_\mathrm{s}(S_{\mathrm{avg},\alpha,n})$ and $\mathcal{F}_\mathrm{m}(S_{\mathrm{avg},\alpha,n})$ on the fidelity of the state and of the measurements, each with confidence $1-\alpha$.

We calculate the finite-size-corrected S-value (for $n$ trials) by employing the method of Ref.~\cite{Bancal2022b}, which provides a tight lower bound on the average success probability of $n$ Bernoulli trials. Importantly, this method does not assume that the $n$ trials are independent and identically distributed (IID), thereby closing the memory loophole \cite{Larsson2014}. In this context, we introduce the Bell test in the framework of a game. As described further in Ref.~\cite{Storz2023}, in such a setting the two parties A and B aim to win as a team. The game consists of $n$ rounds. In each trial $i$, A and B individually receive a question, corresponding to a random and uniform input bit $x$ and $y$, and they are asked to reply with a response, corresponding to the measurement outcomes $a$ and $b$. A and B are said to win a certain round of the game if $ \textit{x} \wedge \textit{y} = \textit{a} \bigoplus \textit{b} $, where $\wedge$ denotes the logical AND function and $\bigoplus$ the XOR operation. The winning probability for round $i$ is given by $p_i=(4+S_i)/8$. The best possible classical strategy allows A and B to achieve a success probability of $p_{\mathrm{win}}^{\mathrm{LHV}} = 3/4 = 0.75$, but a strategy involving shared entanglement achieves higher values, up to $p_{\mathrm{win}}^{\mathrm{QM}} = \cos^2{(\pi/8)}\approx0.854$.

\begin{figure}
  \begin{center} \includegraphics[width=0.9\columnwidth]{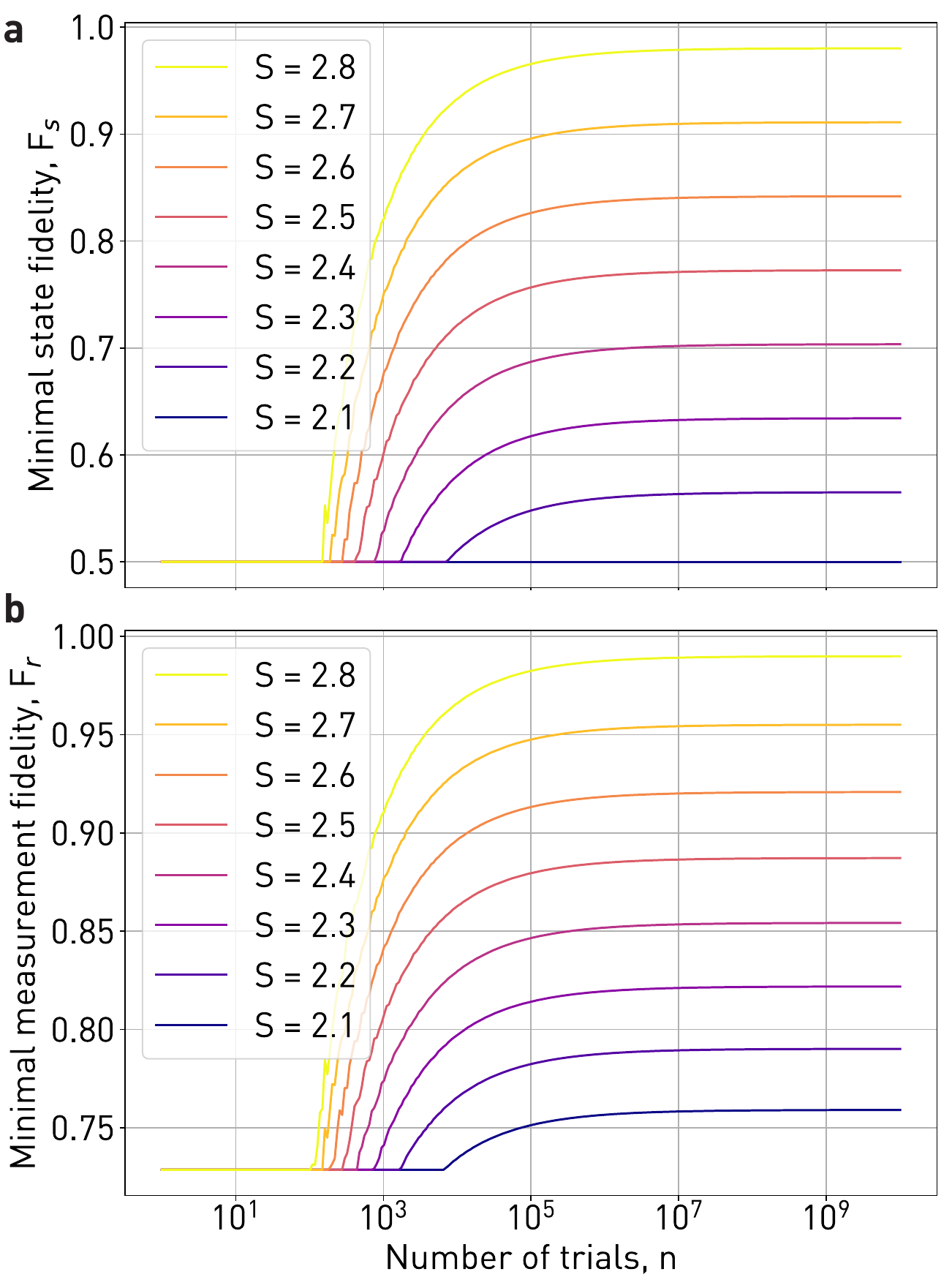}
  \caption{\textbf{Effect of finite statistics on the self-testing results.} Dependence of the minimal certifiable state (\textbf{a}) and measurement (\textbf{b}) fidelities given an observed S-value $S$ and the number of experimental trials $n$ in a self-testing experiment.}
  \label{fig:n_behavior}
  \end{center}
\end{figure}

After all $n$ trials, we calculate the number of times $c$ where A and B won the game. The average winning probability given $n$ trials and confidence $1-\alpha$ is lower bounded by \cite{Bancal2022b}
\begin{equation}
p_{\mathrm{avg},\alpha,n}=
\begin{cases}
0 & c=0\\
I^{-1}_\alpha(c,n-c+1) & c>1,\ \alpha\leq\alpha^* \\
\frac{1}{n}\left(c - \frac{1-\alpha}{1-\alpha^*}\right)& c>1,\ \alpha > \alpha^*
\end{cases},
\end{equation}
where $\alpha^*=I_{(c-1)/n}(c,n-c+1)$. Here, $I$ is the regularized incomplete Beta function and $I^{-1}$ its inverse. When $\alpha\leq1/4$, this bound is tight and the corresponding CHSH S-value is given by
\begin{equation}
S_\mathrm{avg,\alpha,n} = 8p_{\mathrm{avg},\alpha,n}-4.
\end{equation}
As a result, small sample sizes yield a substantial reduction of the finite-size-corrected S-value compared to the measured value, and thereby lead to the certified fidelities being lower. Figure~\ref{fig:n_behavior}a displays this behavior for a set of measured S-values for the certification of a minimal Bell state fidelity and Fig.~\ref{fig:n_behavior}b for the minimal average measurement fidelity. We observe that finite-size effects substantially matter for small sample sizes $n\ll10^7$, whereas for $n\geq10^7$ they become negligible. For this reason, we choose to measure $n=2^{24}\approx1.67*10^7$ trials in the presented self-testing experiment.

\begin{figure*}
  \begin{center} \includegraphics[width=1.7\columnwidth]{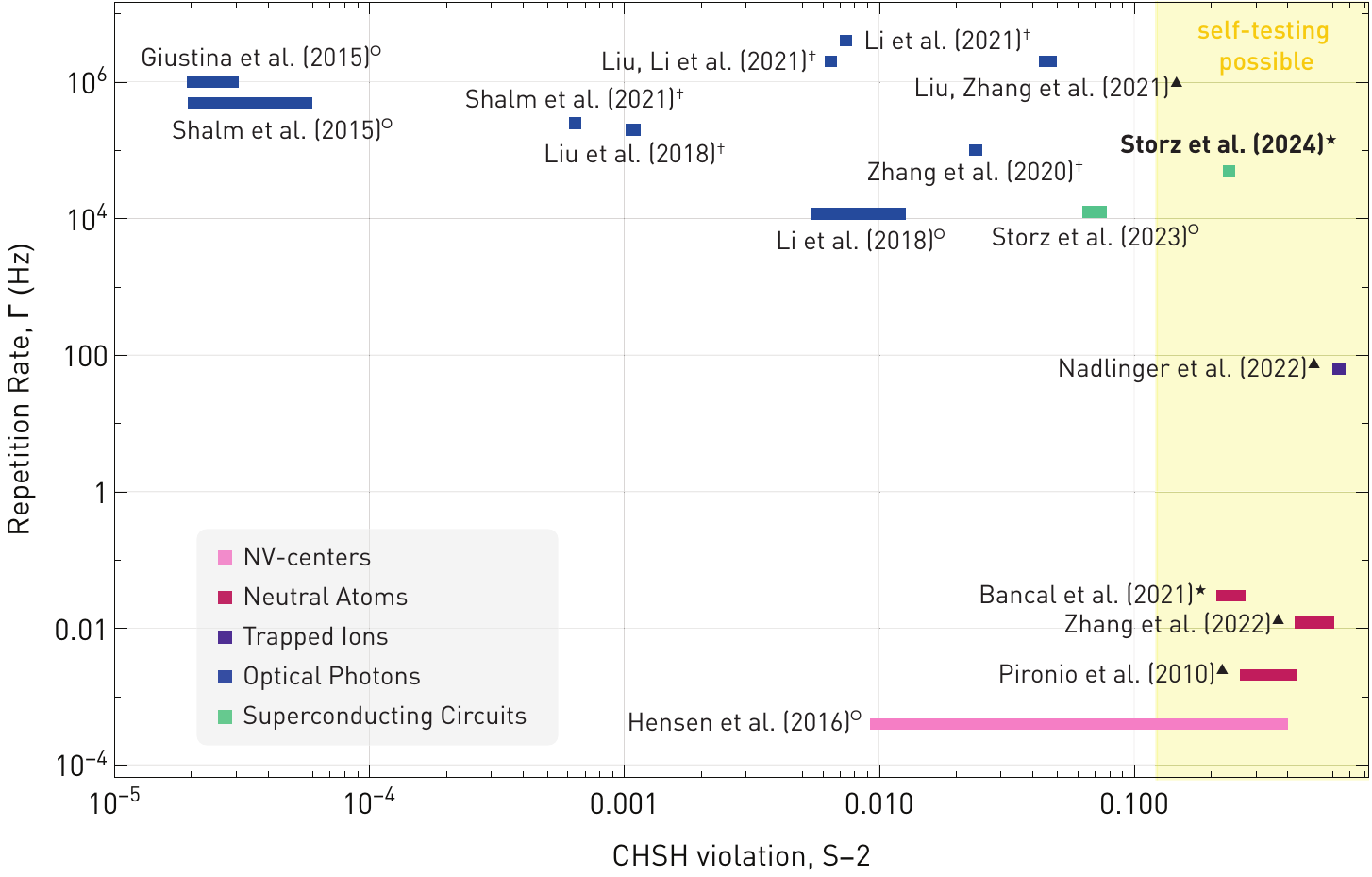}
  \caption{\textbf{Comparison of related experiments.} This figure compares the CHSH inequality violation and the repetition rate of published device-independent quantum information processing experiments and loophole-free Bell tests. For each experiment, we display a horizontal bar where the experimentally measured S-value marks the high end (i.e. the right side), and the lower (left) end represents the achieved S-value, with 99$\%$ confidence, when including finite size effects (see Appendix~\ref{app:statistics}). Typically, experiments involving more statistics are represented by a shorter bar. Only if the results of an experiment fully fall into the yellow shaded region, a self-testing analysis is possible even when including finite-size effects. We note that three of the five experiments which fulfill that criterion, i.e.~the quantum key distribution implementations in Refs.~\cite{Nadlinger2022,Zhang22} and the pioneering randomness generation experiment in Ref.~\cite{Pironio2010}, do not exclude classical communication between the nodes by space-like separation, and instead rely on additional assumptions on shielding properties of their devices. The colors of the markers represent the experimental platform, see legend, and the superscript symbols describe the type of experiment: loophole-free Bell tests (circles), device-independent randomness generation and expansion (daggers), device-independent quantum key distribution (triangles), and device-independent self-testing (stars). 
  }
  \label{fig:comparison}
  \end{center}
\end{figure*}

\section{Comparison to relevant literature}
\label{app:literature}
In this section, we compare our experiment to other implementations of device-independent quantum information processing algorithms, and to related experimental loophole-free Bell tests.

For most of these protocols, it is critical to achieve high Bell inequality violations (i.e. high CHSH S-values), while being able to capture a large amount of statistics (i.e.~trials $n$) over the course of a short time, as discussed in the main text and in Appendix~\ref{app:statistics}. For self-testing specifically, the corresponding analysis is only possible with a minimal S-value of $S>S^*=2.106$ \cite{Kaniewski2016}, and with preferably at least $n=10^7$ trials, thereby requiring a challenging combination of both of these metrics. 

Figure~\ref{fig:comparison} compares earlier relevant experiments in terms of the achieved CHSH S-value, and the number of trials run. Optical setups are typically able to collect a large number of trials $n$ in a short time, but they fall short in achieving high S-values. On the other hand, solid-state based implementations typically achieve high Bell inequality violations but struggle with the acquisition of a large number of trials. For these reasons, it is hard to reach the targeted regime (yellow shaded area) that enables device-independent self-testing. The experimental setup presented in this publication reaches a useful regime with both high S-values and large datasets, enabling self-testing with high repetition rates. 
As discussed in Ref.~\cite{Bancal2021}, the neutral atom setup presented in \cite{Rosenfeld2017} only manages to collect enough statistics when including data acquired over the course of several months.

We further note that the listed experiments performing device-independent quantum key distribution (triangle superscripts in Figure~\ref{fig:comparison}) did not close the locality loophole by space-like separation of the nodes, and therefore do not technically fulfill the corresponding requirement for self-testing. 
In our system, actively leaving the locality loophole open allows for longer readout integration times, thereby increasing the readout fidelity and so the achieved S-value (see Refs.~\cite{Storz2023, Magnard2021}). In such a configuration, we expect to reach S-values above $S=2.3$.

\section{Enabling Updates to the Experimental Setup}
\label{app:updates}
In the following, we briefly discuss the key changes to the experimental setup originally presented in Ref.~\cite{Storz2023}, allowing us to increase the S-value from $S=2.074$ to $S=2.235$, a critical step which enabled the self-testing experiment presented here.

The S-value is predominantly limited by the Bell state fidelity of the two qubits, and that metric in turn by the photon loss in the quantum channel. We therefore focused on reducing the photon loss in the quantum channel between the two nodes by replacing some of the components forming the interconnects between the quantum devices, printed circuit boards and the waveguide, with less lossy ones. Specifically, we have replaced copper-based coaxial microwave cables, which connect the quantum devices to the waveguide, with superconducting coaxial cables out of Niobium-Titanium. In addition, we have removed a cryogenic circulator in the channel, which was previously used to characterize the mode function of the transmitted photons, and to aid with the thermalisation of the stray radiation in the waveguide. Moreover, we have reduced the number of adapters in the quantum channel. These changes lowered the photon loss in the channel in total by about 6$\%$. 

Furthermore, in the presented self-testing experiment, we aimed to take an order of magnitude more data compared to the loophole-free Bell test published earlier \cite{Storz2023}. This required substantial improvements in the electronic synchronisation of the different devices of the setup, so that the calibration of pulses, readout weights and other aspects remained stable over extended periods of time. For further details on this aspect, we refer to Ref.~\cite{Storz2023a}.

\section{Pulse Calibrations}
\label{app:calibrations}
In this section, we discuss calibration measurements performed before the self-testing experiment, with the purpose of characterizing the performance of the individual components of the system.

\paragraph{Single-shot readout}
First, we calibrate the integration weights of the FPGA for the employed rapid, single-shot readout scheme \cite{Walter2017}. For this purpose, we prepare each qubit 8000 times in the ground and excited state, respectively, and record both quadratures of the measurement signal. Using this data, we determine the optimal state assignment threshold, and the corresponding integration weights for a total integration time of 50~ns. During the main experiment, we repeat this calibration process after each set of $2^{20}$ trials of the self-testing experiment, in order to capture and then mitigate phase drifts of the devices involved in the readout scheme, which cause drifts of the state assignment threshold value and therefore a reduced readout fidelity. The readout fidelities extracted from these 16 calibration iterations remain mostly stable over time, with an average readout fidelity of $F_{r,A}=1-p(e|g)-p(g|e)=98.9\%$ for qubit A and $F_{r,B}=97.2\%$ for B, as shown in Figure~\ref{fig:readout}.

\paragraph{Entanglement generation}
In the following, we discuss the calibration of the pulses involved in the entanglement generation scheme. For distributing entanglement between the superconducting qubits at the two nodes, we employ the scheme of Ref.~\cite{Kurpiers2018}. Specifically, we prepare qubit A in an equal superposition state between the first ($\ket{e}$) and the second ($\ket{f}$) excited state, and then drive the $\ket{f0}\leftrightarrow\ket{g1}$ sideband transition of a coupled qubit (first state index) and on-chip resonator (second state index) system using a microwave pulse. This pulse transfers the excitation of the qubit into a microwave resonator in the form of a photon, which in turn decays into the waveguide. We calibrate the time-dependent amplitude and phase of this emission pulse to have qubit A emit a photon at a rate $\Gamma\approx20$~MHz \cite{Magnard2018}. Upon arrival at node B, we reabsorb the microwave photon in qubit B using the time-reversed process, at the same rate. Afterwards, we map the $gf$-manifold of qubit B to the $ge$-subspace. Finally, we perform a local unitary $\pi/2$ rotation around the x-axis of qubit A and around the (x+$\Theta$) axis on qubit B, where the angle $\Theta$ is experimentally calibrated to compensate for a synchronisation offset between the electronics of the two setups. Ideally, this scheme yields a Bell state $\ket{\phi}^+=(\ket{gg}+\ket{ee})/\sqrt{2}$.

It is further important to calibrate the time at which node B starts reabsorbing the photon sent through the quantum channel. It takes about 155~ns for the photon to travel from node A to B, so calibrating the start of the absorption pulse at B accordingly ensures that the majority of the energy of the photon is transferred to the qubit. This optimal time is determined by emitting a photon at node A, and reabsorbing it at node B while changing the start time of the absorption pulse. After a photon transfer from A to B, we measure the state of the qubit using averaged, 3-level readout. Ideally, qubit B ends up in the f-level with population $p_f=1$. However, caused by photon loss in the quantum channel, qubit decay, and imperfections in the emission and absorption process, this value is lowered to about $p_f=83\%$. This characterizes the transfer fidelity. We find the optimal delay between the emission and the absorption pulse to be $\tau=155.4$~ns, which aligns with the expected photon traveling time through the waveguide and the synchronisation offset between the arbitrary waveform generators responsible for generating the pulses at the two nodes.

We then perform quantum state tomography of the generated Bell state, yielding a fidelity of 85.9~$\%$ on average when correcting for readout errors, and 83.9~$\%$ otherwise.

\begin{figure}
  \begin{center} \includegraphics[width=\columnwidth]{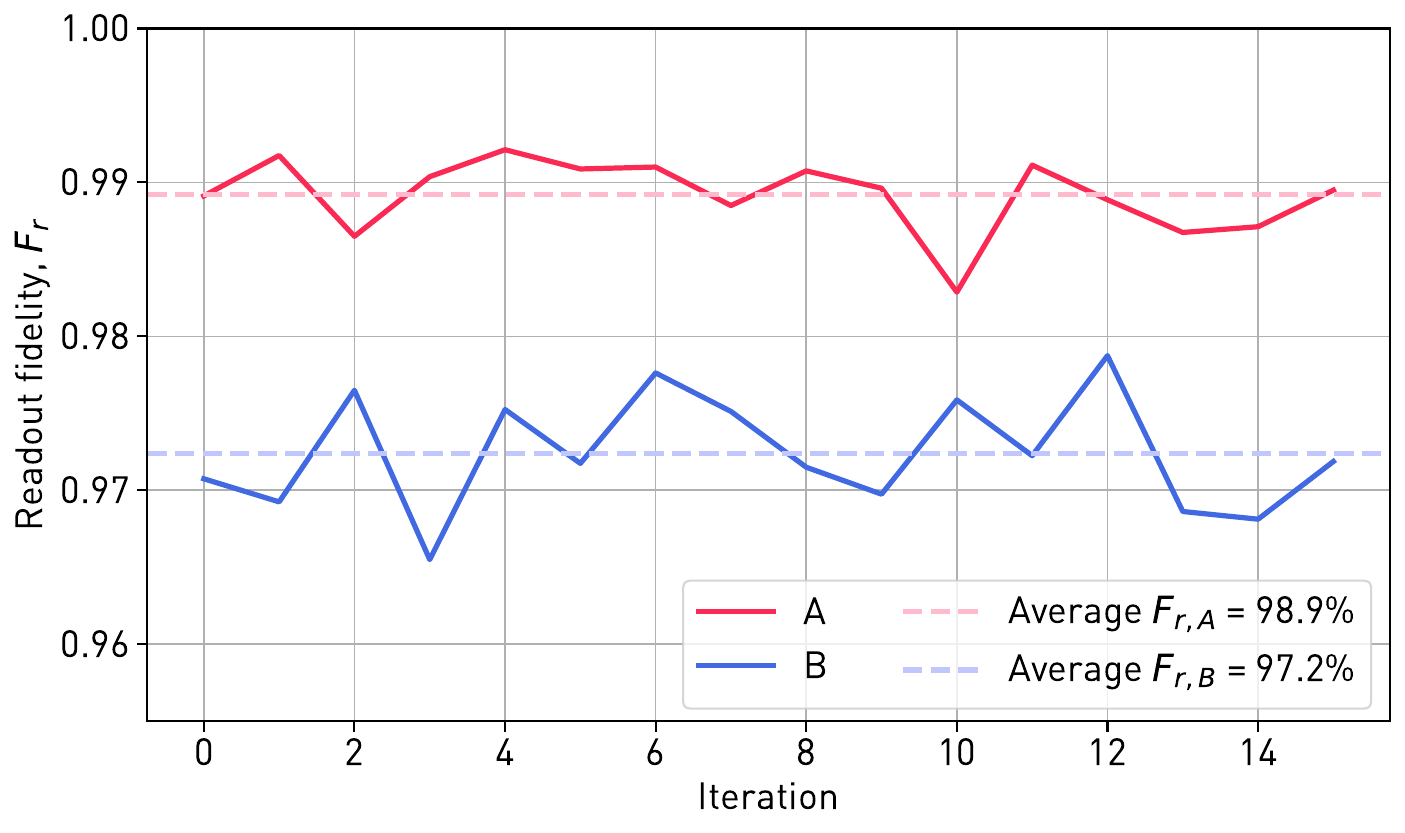}
  \caption{\textbf{Evolution of the measurement fidelity over time.} Measured single-shot readout fidelity on nodes A and B during each of the 16 consecutive acquisition blocks of the self-testing experiment, each comprising $n=2^{20}$ trials and taking approximately 150 seconds (including software overhead). The readout integration duration is 50~ns on both nodes.}
  \label{fig:readout}
  \end{center}
\end{figure}

\begin{figure*}
  \begin{center} \includegraphics[width=1.9 \columnwidth]{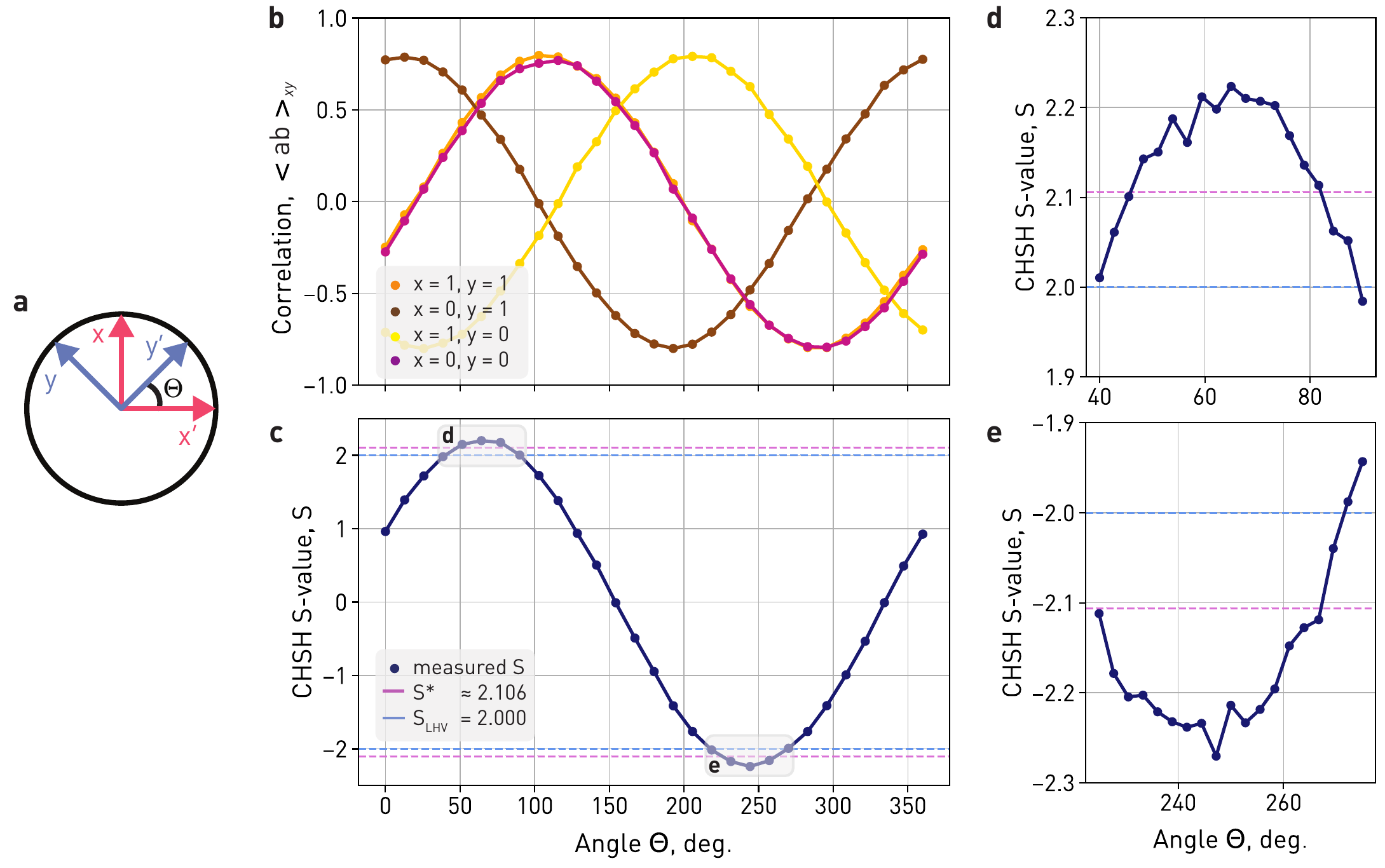}
  \caption{\textbf{Measured Bell inequality violation at different basis settings.} CHSH S-value plotted against the measurement basis offset angle $\Theta$. a, Illustration of the measurement bases (x$\perp$x',y$\perp$y') at the two nodes A and B, offset from each other by the measurement basis offset angle $\Theta$. b, Correlators $\langle a \cdot b\rangle_{(x,y)}$ of 29 individual Bell test runs over the full range of $\Theta$. Each point represents a Bell test comprising of n=36'157 trials. The lines connecting the data are guides to the eye. c, Corresponding CHSH S-value for each individual Bell test at offset angle $\Theta$, calculated from the correlators in panel b. The dashed lines correspond to the local hidden variable bound $S_\mathrm{LHV}=2$ (blue) and the self-testing bound $S^*\approx2.106$ (purple). d and e, CHSH S-value of Bell tests for a more narrow selection of $\Theta$ in the optimal angle regions indicated in panel c, comprising of n=55'188 trials each. In the self-testing experiment presented in the main text, we chose a measurement basis offset angle $\theta=245$~deg.}
  \label{fig:bell_angle}
  \end{center}
\end{figure*}

\paragraph{Bell test}
The two nodes A and B have access to two measurement bases $x=0,~x'=1$ and $y=0,~y'=1$, correspondingly. In the physical implementation, these bases correspond to rotations of the state of the two qubits around specific axes in the Bloch sphere. We call the relative angle $\Theta$ between the two local measurement bases at the two nodes (Fig.~\ref{fig:bell_angle}a) the measurement basis offset angle. In an idealized implementation, where all devices responsible for controlling the state of the qubits are perfectly synchronized to the same reference frame and the two qubits are operated at the same frequencies, the highest Bell inequality violation is found at $\Theta=\pi/4$ and $\Theta=5\pi/4$. 
We perform a set of 29 (loophole-free) Bell tests, each comprising $n=36'157$ trials, over the full range of $\Theta$, to find the two optimal offset angle regions, see Figs.~\ref{fig:bell_angle}b and c. As expected, we observe that the optimal offset angle regions are separated by 180 degrees (or $\pi$, accordingly). We then perform a set of 19 Bell test measurements, each consisting of $n=55'188$ trials, over a more narrow range of $\Theta$ around the optimal offset angle points, see Figs.~\ref{fig:bell_angle}d and e. Based on these calibrations, we fix the measurement basis offset angle to $\Theta=245$~degrees for the main self-testing experiment presented in the article.

\section{Robust Self-Testing Analysis}
\label{app:theory_main}

\subsection{Self-testing in the ideal case}
In the scenario of interest for this work, the aim is to characterize a state and measurements solely from the CHSH value $S$. Although in the case $S=2\sqrt{2}$ we do expect that well defined qubit measurements are performed on a maximally entangled two-qubit state, we do not presume that the Hilbert space dimension and the measurement calibration are known and remain unchanged throughout the experiment. Self-testing defines such a device-independent characterization.

Let us delve more deeply into self-testing statements within the ideal case where $S=2\sqrt{2}$. Considering the characterization of a bipartite state of an unknown dimension $\rho_{AB} \in L(\mathcal{H}_A \otimes \mathcal{H}_B)$, self-testing guarantees the existence of local extraction (completely positive and trace preserving) maps 
\begin{eqnarray}
\Lambda_A &:& L(\mathcal{H}_A) \rightarrow L(\mathbb{C}^2)\\
\Lambda_B &:& L(\mathcal{H}_B) \rightarrow L(\mathbb{C}^2)
\end{eqnarray}
which identify a sub-system of local dimension 2 in a maximally entangled two qubit state $|\phi^+\rangle~=~\frac{1}{\sqrt{2}}(|00\rangle~+~|11\rangle)$, that is
\begin{equation}
(\Lambda_A \otimes \Lambda_B)[\rho_{AB}] = \ketbra{\phi^+}.
\end{equation}
Concerning the characterization of the measurement, Alice's apparatus $\mathcal{A}$ is described as a device taking as inputs a state $\rho_A \in L(\mathcal{H}_A)$ and the measurement setting (a classical value $x=\{0,1\}$ in a register labeled by $in$) and giving as output the result of the measurement (a~classical value $a=\{0,1\}$ in a register labeled by $out$), see Fig.~\ref{fig:fig_channel}. In the ideal setting, self-testing guarantees the existence of an injection map
\begin{equation}
V_A : L(\mathbb{C}^2) \rightarrow L(\mathcal{H}_A)
\end{equation}
such that $ \forall \sigma \in L(\mathbb{C}^2)$ and $\forall x \in \{0,1\}$, 
\begin{equation}
\label{def_selftesting_ideal}
(\mathcal{A} \circ V_A)[\sigma \otimes |x\rangle\langle x|_{in}] = \tilde{\mathcal{A}}[\sigma \otimes |x\rangle\langle x|_{in}].
\end{equation}
The reference apparatus $\tilde{\mathcal{A}}$ corresponds to a pair of orthogonal Pauli $\sigma_z=|0\rangle\langle0|-|1\rangle\langle 1|$ and $\sigma_x=|+\rangle\langle +|-|-\rangle\langle -|$ measurements for the setting choice $x=0$ and $x=1$ respectively, that is
\begin{equation}
\begin{split}
\label{ideal_instrument}
        \tilde{\mathcal{A}}[\sigma \otimes |x\rangle\langle x|_{\text{in}}]= &  \bra{0} \sigma \ket{0} \langle x | 0\rangle_{\text{in}} \ketbra{0}_{\text{out}} \\
        & + \bra{1} \sigma \ket{1} \langle x | 0\rangle_{\text{in}} \ketbra{1}_{\text{out}}\\
        & + \bra{+} \sigma \ket{+} \langle x | 1\rangle_{\text{in}} \ketbra{0}_{\text{out}} \\
        & + \bra{-} \sigma \ket{-} \langle x | 1\rangle_{\text{in}} \ketbra{1}_{\text{out}}.
\end{split}
\end{equation}
In a scenario where Alice shares an entangled state and copies of the input register with a third party, appropriate projections performed by the third party cover all possible inputs for Alice. The equality in Eq.~\eqref{def_selftesting_ideal} can thus be expressed with a single bi-partite quantum state and two input registers of the form $\rho^+~=~\ketbra{\phi^+} \otimes (\ketbra{00}+\ketbra{11})/2$, and self-testing guarantees the existence of the map $V_A$ such that
 \begin{equation}
(\mathcal{A} \circ V_A)[\rho^+] = (\tilde{\mathcal{A}})[\rho^+]
\end{equation}
where $\mathcal{A} \circ V_A$ and $\tilde{\mathcal{A}}$ operate on Alice's inputs only.

We discuss below the device-independent state and measurement characterizations separately in the realistic case where $S < 2\sqrt{2}$. 

\subsection{State self-testing in the noisy case}

The case $S=2\sqrt{2}$ that we considered so far cannot be realized in an actual experiment due to unavoidable imperfections. Following Ref.~\cite{Kaniewski2016}, we characterize the actual state $\rho_{AB}$ in a realistic test with $S<2\sqrt{2}$  by the best fidelity one can extract with respect to the ideal state $|\phi^+\rangle$  
\begin{equation}
    \mathcal{F} (\rho_{AB}, \ketbra{\phi^+}) = \max_{\Lambda_A, \Lambda_B} F((\Lambda_A \otimes \Lambda_B) \rho_{AB}, \ketbra{\phi^+}).
\end{equation}
F is the square of the Ulhmann fidelity: for two density matrices $\tau$ and $\tau'$, $F(\tau,\tau')~=~\text{Tr}(\sqrt{\tau^{1/2}\tau'\tau^{1/2}})^2)$
which here reduces to the overlap ($F (\tau,\tau') = \text{Tr} [\tau \tau' ]$) since one of the two states is pure. 

To bound $\mathcal{F} (\rho_{AB},\ketbra{\phi^+})$, we take into account the fact that the state was used in an experiment leading to a given CHSH value. Interestingly, its infimum over all states compatible with a CHSH value $S$, labelled $\mathcal{F}_\mathsf{s}(S),$ can be lower bounded by \begin{equation}\label{eq:Kaniewski}
    \mathcal{F}_\mathsf{s}(S) \geq \frac{1}{2}+\frac{1}{2} \cdot \frac{S-S^*}{2 \sqrt{2}-S^*}
\end{equation}
where $S^* = \frac{16+14\sqrt{2}}
{17}\approx 2.106$~\cite{Kaniewski2016}. This tells us that for any state used to obtain a CHSH value of at least $S$, we can extract a bipartite state with local maps whose fidelity to the singlet state is higher than the right hand side of Ineq.~\eqref{eq:Kaniewski}. This lower bound is plotted in Fig.~\ref{fig:S_vs_fid}. 

It is not currently known whether this bound is tight. However, it has been shown that a non-trivial self-testing statement cannot be obtained for any $S^*$ below $\approx 2.05$~\cite{Valcarce2020}. Other self-testing bounds can eventually yield a higher extractability than given by Eq.~\eqref{eq:Kaniewski} by leveraging more information than the mere CHSH value~\cite{Bancal2015,Valcarce2022}.

\subsection{Measurement self-testing in the noisy case}
In the spirit of what has been done for state self-testing, we characterize the actual measurement apparatus in the noisy case $S<2\sqrt{2}$ by the best fidelity one can get between $\mathcal{A} \circ V_A[\rho^+]$ and $\tilde{\mathcal{A}}[\rho^+]$ when maximizing the injection map, that is
\begin{equation} \label{eq:defmeasfidelity}
    \mathcal{F}(\mathcal{A},\tilde{\mathcal{A}}) = \max_{V_A} F((\mathcal{A} \circ V_A)[\rho^+], \tilde{\rho}),
\end{equation}
where $\tilde{\rho}:=\tilde{\mathcal{A}}[\rho^+]$. The infimum of $\mathcal{F}(\mathcal{A},\tilde{\mathcal{A}})$ over all instruments compatible with a CHSH value $S$ is labelled $\mathcal{F}_{\mathsf{m}}(S)$. Its value is given in the following proposition and plotted in Fig.~\ref{fig:S_vs_fid}.

\begin{figure}
  \begin{center} \includegraphics[width=1\columnwidth]{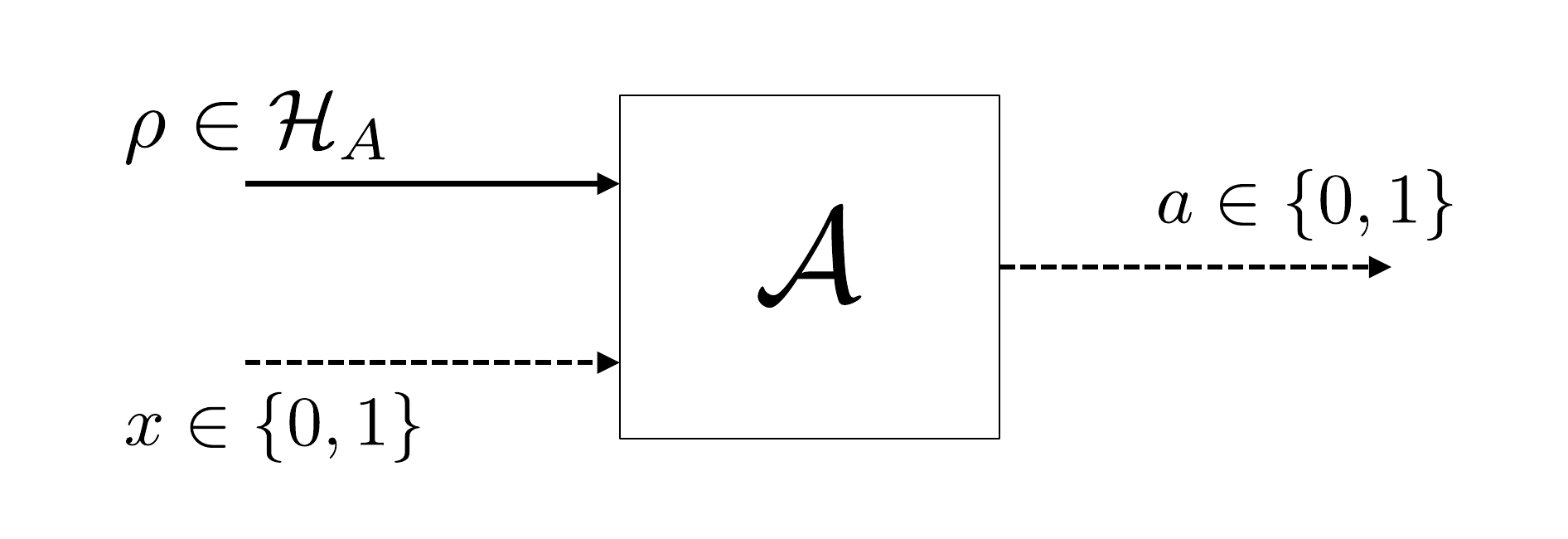}
   \caption{\textbf{Schematic description of a measurement apparatus $\mathcal{A}$ of one party in a Bell-experiment with two inputs and two outputs.} Formally speaking, it is a quantum channel with two separated inputs: a classical bit $x$, encoding the measurement choice, and a quantum state $\rho$, being measured, in an Hilbert space $\mathcal{H}_A$ of unknown dimension; and one output, consisting of a classical bit $a$ encoding the measurement outcome.}
  \label{fig:fig_channel}
  \end{center}
\end{figure}

\begin{figure}
  \begin{center} \includegraphics[width=1\columnwidth]{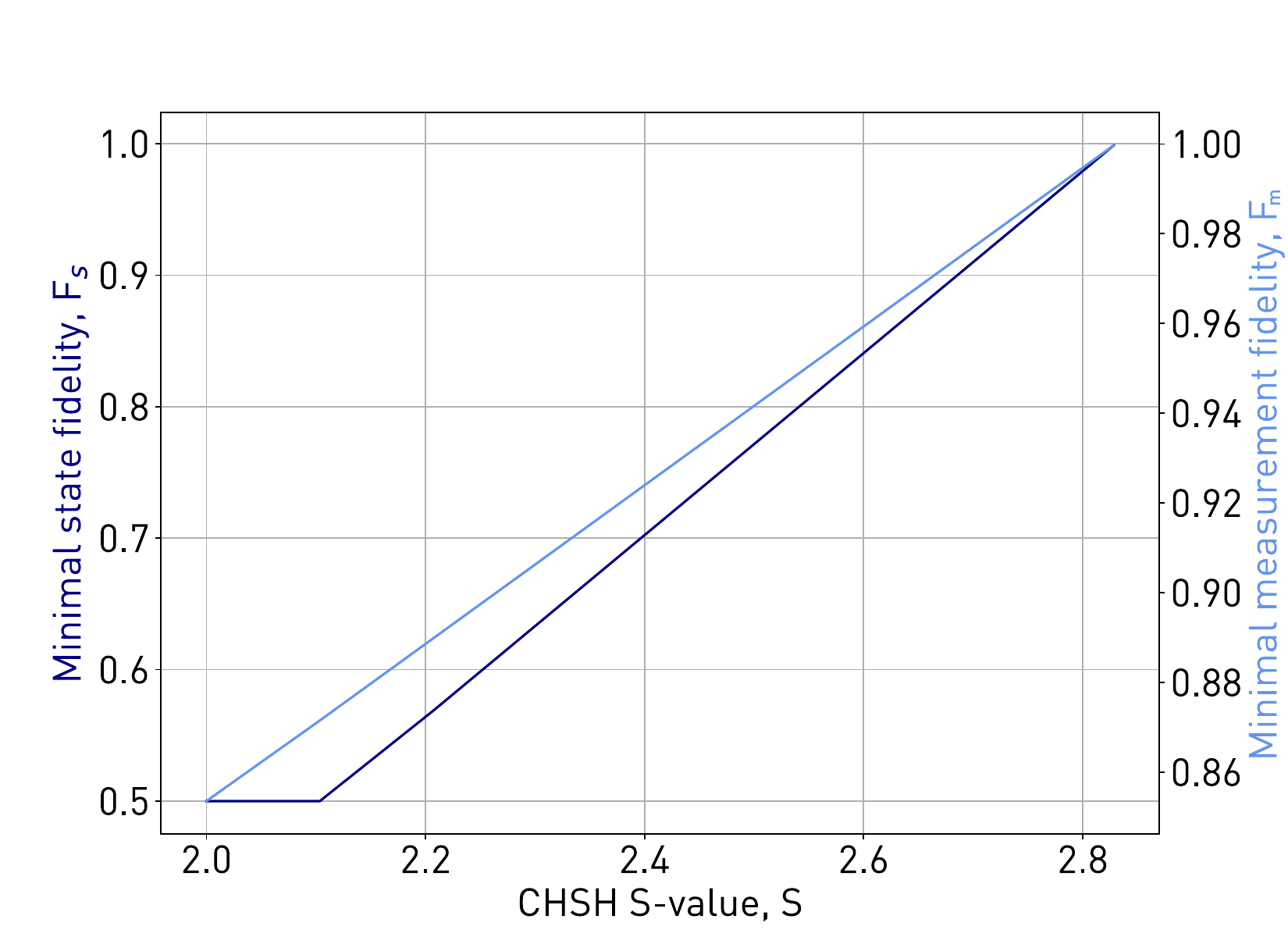}
  \caption{\textbf{Relation of the CHSH S-value with the self-testing results.} The figure shows the minimal Bell state fidelity (dark blue) and measurement fidelity (light blue) certifiable with a given measured CHSH S-value, see Eqs.~1 and ~2 in the main text.}
  \label{fig:S_vs_fid}
  \end{center}
\end{figure}

\begin{Proposition}
The minimal measurement fidelity with respect to the CHSH value $S$ is equal to:
\begin{equation}
    \mathcal{F}_{\mathsf{m}}(S) = \frac{ \sqrt{2} \, S +4}{8}.
\end{equation}
\end{Proposition}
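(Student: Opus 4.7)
The plan is to compute $\mathcal{F}_\mathsf{m}(S)$ by reducing to the qubit case via Jordan's lemma, evaluating the fidelity in closed form, and matching the bound with an explicit example.

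First, I would invoke Jordan's lemma on Alice's two observables $A_x = E^x_0 - E^x_1$ (Naimark-dilated to projectors if needed). This gives a simultaneous block decomposition of $\mathcal{H}_A$ into at most two-dimensional invariant subspaces and writes the global CHSH value as a convex combination $S = \sum_\alpha q_\alpha S^{(\alpha)}$ of block-wise CHSH values. Since the injection map $V_A$ can embed the logical qubit into a single block, and since the claimed bound $\tfrac{\sqrt{2}S+4}{8}$ is affine in $S$, it is enough to establish $\mathcal{F}(\mathcal{A},\tilde{\mathcal{A}}) \geq \tfrac{\sqrt{2}S+4}{8}$ at the level of one qubit block; the general statement then follows by linearity.

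Within such a block, I would parametrize Alice's observables by Bloch vectors $\hat{a}_0,\hat{a}_1$ of relative angle $\gamma$, take $V_A$ to be a unitary implementing a rotation $R \in SO(3)$, and compute both $\tilde{\mathcal{A}}[\rho^+]$ and $(\mathcal{A}\circ V_A)[\rho^+]$ directly. Both are classical-classical-quantum states, diagonal in the registers $X'$ and $O$, with pure qubit conditional states on the reference $R$; the transpose appearing from the identity $(\mathbb{1}\otimes M)|\phi^+\rangle = (M^T\otimes \mathbb{1})|\phi^+\rangle$ leaves the $z$- and $x$-components of the Bloch vectors invariant. Applying the Uhlmann fidelity formula for block-diagonal classical-quantum states then reduces everything to
\[
F = \tfrac{1}{8}\Bigl(\sqrt{1+(R^T\hat{a}_0)_z}+\sqrt{1+(R^T\hat{a}_1)_x}\Bigr)^2 .
\]
Maximizing over $R$ with $\gamma$ fixed (via the half-angle and sum-to-product identities) yields the closed form $\max_R F = \tfrac12\bigl(1+\sin(\gamma/2+\pi/4)\bigr)$, attained when the bisector of $\hat{a}_0,\hat{a}_1$ is sent onto the ideal bisector $(\hat{z}+\hat{x})/\sqrt{2}$.

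To finish, I would invoke the standard one-qubit CHSH bound: for Alice's observables at angle $\gamma$ on the maximally entangled state, the maximum CHSH value over Bob's choice and the state is $2\sqrt{2}\sin(\gamma/2+\pi/4)$, so $\sin(\gamma/2+\pi/4)\geq S/(2\sqrt{2})$, giving $\mathcal{F}(\mathcal{A},\tilde{\mathcal{A}})\geq \tfrac12 + \tfrac{S}{4\sqrt{2}} = \tfrac{\sqrt{2}\,S+4}{8}$. Tightness is witnessed by Alice's and Bob's observables placed symmetrically in the $xz$-plane at the extreme angle $\gamma$ that saturates the CHSH bound on a singlet, combined with the optimal rotation $R$ identified above. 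The main obstacle is the second step: carefully tracking the singlet transpose together with the classical-quantum Uhlmann fidelity — the square-root structure there is what conspires with the half-angle identity to produce the specific $\sqrt{2}$ prefactor in the stated bound.
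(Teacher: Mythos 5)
Your overall architecture is the same as the paper's: a Jordan-lemma reduction to a single qubit block, an explicit computation of the extractable fidelity as a function of the angle $\gamma$ between Alice's two observables, and the trade-off $S\leq 2\sqrt{2}\sin(\gamma/2+\pi/4)$ to eliminate $\gamma$ in favour of $S$. Your closed forms agree with the paper's ($\tfrac12(1+\sin(\gamma/2+\pi/4))=\tfrac14(2+\sqrt{2}\cos(\gamma/2)+\sqrt{2}\sin(\gamma/2))$), and for the block-decomposition step your ``affine in $S$'' remark should be read as: some block attains at least the average $S$, and the bound is increasing in $S$, so injecting into that block suffices — which is exactly the paper's argument. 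One small caution: the fidelity of the whole apparatus is \emph{not} a convex combination of block fidelities (the injection map targets a single block), so the reduction must go through the best block, not through averaging.

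The genuine gap is in the tightness direction. The quantity $\mathcal{F}(\mathcal{A},\tilde{\mathcal{A}})$ is defined as a maximum over \emph{all CPTP} injection maps $V_A$, whereas you only optimize over unitary rotations $R\in SO(3)$. That restriction is harmless for the lower bound $\mathcal{F}_{\mathsf{m}}(S)\geq(\sqrt{2}S+4)/8$ (a maximum over a subset still lower-bounds the true maximum), but the Proposition asserts an \emph{equality}, and your tightness witness only shows that the extremal apparatus achieves \emph{at least} $(\sqrt{2}S+4)/8$ with the best unitary — not that no non-unitary CPTP map (e.g.\ a measure-and-reprepare channel or a noisy embedding) does strictly better on that apparatus. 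The paper closes exactly this hole: it writes the fidelity in terms of the Choi matrix $C_{V_A}$, applies Cauchy--Schwarz to get $F\leq\tfrac14\operatorname{Tr}(M(\alpha)C_{V_A})$ with $M(\alpha)$ a sum of four rank-one projectors, and then exhibits a feasible dual point $L=\tfrac12\bigl(2+\sqrt{2(1+\sin\alpha)}\bigr)\id$ of the resulting SDP (checking $\id\otimes L\succeq M(\alpha)$ via the spectrum of an explicit $4\times4$ matrix) to certify that the optimal unitary is in fact optimal among all CPTP maps. Without some such argument your proof establishes the inequality used in the experiment, but not the stated equality.
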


This tells us that for any measurement apparatus $\mathcal{A}$, of arbitrary dimension, used to obtain a CHSH value at least $S$, there exists a local injection map such that it can be used as the ideal pair of Pauli measurements $\tilde{\mathcal{A}}$ with a fidelity at least equal to $\mathcal{F}_\mathsf{m}(S)$. 

The trivial bound for the fidelity is given for $S=2$, that is $\mathcal{F}_\mathsf{m}(2) = \frac{2\sqrt{2}+4}{8} \approx 0.854$. This corresponds to the fidelity one can always obtain with any apparatus acting on a qubit space by choosing the proper injection. This value of fidelity is attainable by an apparatus that can only give rise to local CHSH realizations ($S~\leq~2$). This is also the best fidelity one can obtain with an apparatus realizing measurements in only one basis (discarding the classical input bit). The following subsections are dedicated to the proof of this result, which is the main theoretical contribution of this publication.

\subsubsection{General case}

We first consider the general case of a Bell scenario.

\begin{Proposition}
The optimization on all possible quantum apparatus $\mathcal{A}$ can be restricted to an optimization on an apparatus with a qubit input space only.
\end{Proposition}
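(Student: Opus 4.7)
The plan is to reduce the general optimization to the qubit case via a standard combination of Naimark dilation and Jordan's lemma, which exploits the fact that Alice has only two binary-outcome settings.

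First, I would argue that without loss of generality Alice's apparatus $\mathcal{A}$ can be taken to implement projective measurements: any POVM can be dilated via a Naimark extension to a projective measurement on a larger Hilbert space, and since the apparatus itself is untrusted and of unknown dimension, enlarging $\mathcal{H}_A$ is free. The auxiliary degrees of freedom used in the dilation can be absorbed into Alice's share of $\rho_{AB}$, so neither the CHSH statistics nor the fidelity defined in Eq.~\eqref{eq:defmeasfidelity} are changed. Hence the infimum over general apparatus equals the infimum over projective apparatus.

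Next, I would apply Jordan's lemma to the two binary projective observables $A_0, A_1$ associated with inputs $x=0,1$: there exists a decomposition $\mathcal{H}_A = \bigoplus_i \mathcal{H}_A^{(i)}$ with $\dim\mathcal{H}_A^{(i)} \leq 2$ such that both $A_0$ and $A_1$ are block-diagonal in this decomposition. Letting $\Pi_i$ denote the projector onto block $i$ and $p_i = \mathrm{Tr}[(\Pi_i\otimes \mathds{1})\rho_{AB}]$, the observed CHSH value decomposes convexly as $S = \sum_i p_i S_i$, where $S_i$ is the CHSH value produced by the restricted qubit apparatus $\mathcal{A}_i$ acting on the normalized block $\rho_{AB}^{(i)} = (\Pi_i\otimes\mathds{1})\rho_{AB}(\Pi_i\otimes\mathds{1})/p_i$. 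Blocks of dimension $1$ can only contribute $S_i\leq 2$, so they are no better than a local strategy and can be absorbed into the qubit analysis.

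The essential step, and the one I expect to be the main obstacle, is to show that the measurement fidelity $\mathcal{F}(\mathcal{A},\tilde{\mathcal{A}})$ is at least a convex combination of the fidelities $\mathcal{F}(\mathcal{A}_i,\tilde{\mathcal{A}})$ over the blocks. For this I would construct a block-respecting injection map: given any optimal choice of qubit-level injections $V_A^{(i)} : L(\mathbb{C}^2) \to L(\mathcal{H}_A^{(i)})$ for each block, classically mix them with weights $p_i$ to form a valid CPTP map $V_A = \sum_i p_i\, \iota_i \circ V_A^{(i)}$, where $\iota_i$ is the isometric embedding of block $i$ into $\mathcal{H}_A$. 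Because $\mathcal{A}$ is block-diagonal, $(\mathcal{A}\circ V_A)[\rho^+]$ decomposes as $\sum_i p_i (\mathcal{A}_i\circ V_A^{(i)})[\rho^+]$, and the linearity of the overlap fidelity with the pure reference state $\tilde{\rho}$ then yields $\mathcal{F}(\mathcal{A},\tilde{\mathcal{A}}) \geq \sum_i p_i \,\mathcal{F}(\mathcal{A}_i,\tilde{\mathcal{A}})$.

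Combining these pieces, the infimum of $\mathcal{F}(\mathcal{A},\tilde{\mathcal{A}})$ over arbitrary apparatus achieving CHSH value $S$ is bounded below by $\sum_i p_i \inf_{\mathcal{A}_i \text{ qubit}, S_i}\mathcal{F}(\mathcal{A}_i,\tilde{\mathcal{A}})$, subject to $\sum_i p_i S_i = S$; since the target qubit bound $(\sqrt{2}S+4)/8$ is linear (and hence convex) in $S$, this infimum is attained already on the qubit subclass, establishing that the optimization reduces to apparatus with a qubit input space. The subtlety I would check carefully is that the block decomposition commutes correctly with the mixed input register $\sum_x \ketbra{x}_{\mathrm{in}}$ appearing in $\rho^+$, so that the block-wise fidelities are computed against the \emph{same} reference state $\tilde{\rho}$ on each block.
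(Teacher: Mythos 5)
Your skeleton---reduce to projective measurements, apply Jordan's lemma, decompose the CHSH value over $2\times 2$ blocks---matches the paper's, but you handle the blocks differently, and the step where you diverge contains an error. The paper never mixes over blocks: since $S=\sum_{ij}\lambda_{ij}S^{ij}$ is a convex combination, at least one block satisfies $S^{ij}\geq S$, and the paper simply composes the candidate qubit injection with the canonical embedding $\id_{\mathbb{C}^2\to\mathcal{H}_A^0}$ into that single block. Because the restricted apparatus on that block is a qubit apparatus compatible with a CHSH value at least $S$, and the target bound is non-decreasing in $S$, the qubit lower bound transfers immediately to the full apparatus---no convexity property of the fidelity functional is needed. (Your explicit Naimark step is a detail the paper leaves implicit and is worth keeping.)

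Your route instead needs $\mathcal{F}(\mathcal{A},\tilde{\mathcal{A}})\geq\sum_i p_i\,\mathcal{F}(\mathcal{A}_i,\tilde{\mathcal{A}})$, which you justify by ``the linearity of the overlap fidelity with the pure reference state $\tilde{\rho}$.'' That premise is false: $\tilde{\rho}=\tilde{\mathcal{A}}[\rho^+]=\tfrac14\sum_{a,x}\ketbra{u_{a,x}}\otimes\ketbra{ax}$ is mixed, and the quantity being optimized is the squared Uhlmann fidelity (note the square roots in the paper's evaluation of $F((\mathcal{A}(\alpha)\circ V_A)[\rho^+],\tilde{\rho})$), not a linear overlap. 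What you actually need is concavity of $F(\cdot,\tilde{\rho})$ in its first argument; this does hold, e.g.\ by Alberti's variational formula
\begin{equation}
F(\tau,\tilde{\rho})=\min_{X\succ 0}\ \mathrm{Tr}(X\tau)\,\mathrm{Tr}(X^{-1}\tilde{\rho}),
\end{equation}
a pointwise minimum of functions linear in $\tau$, so the step is repairable---but as written it rests on a wrong justification. Even after this fix, your final step $\sum_i p_i f(S_i)\geq f(S)$ with $f(S)=(\sqrt{2}S+4)/8$ needs the observation that blocks with $S_i<2$ (or $S_i$ negative) still obey $\mathcal{F}(\mathcal{A}_i,\tilde{\mathcal{A}})\geq f(2)\geq f(S_i)$, since the qubit bound derived from the CHSH constraint is vacuous there; this should be stated. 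The paper's single-block argument sidesteps all of this bookkeeping, which is why it is the cleaner route.
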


\begin{proof}[Proof of Proposition 2] In a Bell scenario with two binary measurements on each party, one can find local bases in which Alice's and Bob's measurements are block diagonal with blocks of size 2 according to Jordan's lemma. The full state shared by Alice and Bob can then be written as a convex mixture of the state on each block. The total CHSH value is the convex mixture of the CHSH value on each qubit-qubit block, that is $S = \sum_{ij} \lambda_{ij} S^{ij}$, with $\sum_{ij} \lambda_{ij} = 1$. In particular, there exists at least one block such that $S^{ij} \geq S$, and one can consider, up to relabeling, that this happens on Alice's first block. Let's denote $\mathcal{A}^0$ the apparatus of Alice restricted to this subspace.

Let's introduce the canonical injection map $\id_{\mathbb{C}^2\to \mathcal{H}_A^0}$ where $\mathcal{H}_A^0$ denotes the Hilbert space associated to the first block of Alice's input space. Suppose that there exists an injection map $V_A^0$ such that $\mathcal{F}(\mathcal{A}^0\circ V_A^0, \tilde{\mathcal{A}}) \geq \mathcal{F}^\star(S)$. Then, the injection map $ V_A := V_A^0 \circ \id_{\mathbb{C}^2\to \mathcal{H}_A^0}$  grants a fidelity $\mathcal{F}(\mathcal{A}\circ V_A, \tilde{\mathcal{A}})=\mathcal{F}(\mathcal{A}^0\circ V_A^0, \tilde{\mathcal{A}}) \geq \mathcal{F}^\star(S)$. Thus the function $\mathcal{F}^\star$ is still a lower bound on the minimal fidelity $\mathcal{F}_\mathsf{m}$ regardless of the Hilbert space dimensions. 

This argument allows us to reduce the optimization onto qubit apparatuses, as any lower bound of those would still be valid overall. Note that if the lower bound on qubit measurement apparatuses is tight, i.e.~if there is an explicit instrument achieving this bound, then the overall bound is also tight.
\end{proof}

\subsubsection{Qubit case}

If the Hilbert space dimension of the input of Alice's apparatus is 2, and up to a local basis rotation which can be absorbed in the injection $V_A$, then $\mathcal{A}$ can be written as
\begin{equation} \label{eq:qubitchannel}
\begin{split}
        \mathcal{A}(\alpha)[\sigma \otimes |x\rangle\langle x|_{\text{in}}]= &  \bra{0} \sigma \ket{0} \langle x | 0\rangle_{\text{in}} \ketbra{0}_{\text{out}} \\
        & + \bra{1} \sigma \ket{1} \langle x | 0\rangle_{\text{in}} \ketbra{1}_{\text{out}}\\
        & + \bra{+_\alpha} \sigma \ket{+_\alpha} \langle x | 1\rangle_{\text{in}} \ketbra{0}_{\text{out}} \\
        & + \bra{-_\alpha} \sigma \ket{-_\alpha} \langle x | 1\rangle_{\text{in}} \ketbra{1}_{\text{out}}
\end{split}
\end{equation}
where $\alpha\in [0,\pi]$, $\ket{+_\alpha}= \cos(\alpha/2)\ket{0}+ \sin(\alpha/2)\ket{1}$ and $\ket{-_\alpha}= \sin(\alpha/2)\ket{0}- \cos(\alpha/2)\ket{1}$.
\begin{Proposition}
    The fidelity of the measurement apparatus $A(\alpha)$ defined in~Eq.\eqref{eq:qubitchannel} with respect to the ideal one given in Eq.~\eqref{ideal_instrument} is given by
    \begin{equation}
    \mathcal{F}(\alpha) = \frac{1}{4}\left(2+\sqrt{2}\cos\left(\frac{\alpha}{2}\right) + \sqrt{2}\sin\left(\frac{\alpha}{2}\right)\right).
    \end{equation}
\end{Proposition}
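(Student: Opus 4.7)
The plan is to reduce the maximization over injection maps $V_A$ to a one-parameter family of real qubit rotations, compute the fidelity in closed form, and verify that the stated expression matches. First, when $V_A$ is a qubit unitary, both $\sigma := (\mathcal{A}(\alpha)\circ V_A)[\rho^+]$ and $\tilde\rho := \tilde{\mathcal{A}}[\rho^+]$ are classical on the registers $\mathit{in}$, $\mathit{ref}$, $\mathit{out}$ (with the $\mathit{in}$ and $\mathit{ref}$ values matched), and carry \emph{pure} conditional states on Bob in every $(x,a)$ branch (because $|\phi^+\rangle$ is maximally entangled and $\mathcal{A}(\alpha)$ has rank-one POVM elements). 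The weights in both decompositions equal $\tfrac14$, so the Uhlmann fidelity collapses to
\begin{equation*}
\sqrt{F(\sigma,\tilde\rho)} \;=\; \tfrac14 \sum_{x,a\in\{0,1\}} \bigl|\langle\tilde\psi_{x,a}|\sigma_{x,a}\rangle\bigr|,
\end{equation*}
reducing the problem to maximizing a sum of four Bob-side overlaps over $V_A$.

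Next, I would argue that the supremum is attained at $V_A = R_Y(\beta)$. Concavity of $\sqrt F$ in its first argument, together with Stinespring dilation and monotonicity of the fidelity under partial trace on the ancilla, lets one replace $V_A$ by an isometry $\mathbb{C}^2\to\mathbb{C}^2$, i.e.~an element of $U(2)$. Every Kraus operator of $\mathcal{A}(\alpha)$ and every target basis vector of $\tilde{\mathcal{A}}$ lies in the real span of $\{|0\rangle,|1\rangle\}$ (the $XZ$ plane of the Bloch sphere), so complex conjugation is a symmetry of the fidelity; averaging $V_A$ with its conjugate and invoking concavity yields a real unitary, which up to a global phase and relabelling of outcomes is $V_A=R_Y(\beta)$.

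For $V_A=R_Y(\beta)$ the effective Alice bases become $\{R_Y(-\beta)|0\rangle, R_Y(-\beta)|1\rangle\}$ at $x=0$ and $\{|+_{\alpha-\beta}\rangle, |-_{\alpha-\beta}\rangle\}$ at $x=1$. Using the ``ricochet'' identity $(U\otimes I)|\phi^+\rangle=(I\otimes U^T)|\phi^+\rangle$ for real $U$, Bob's branch state coincides with Alice's measurement vector in every case. Computing the four overlaps against $\{|0\rangle,|1\rangle,|+\rangle,|-\rangle\}$ and invoking $\cos\theta+\sin\theta = \sqrt{2}\,\cos(\theta-\pi/4)$ gives
\begin{equation*}
\sum_{x,a}\bigl|\langle\tilde\psi_{x,a}|\sigma_{x,a}\rangle\bigr| \;=\; 2\cos(\beta/2) + 2\cos\!\bigl((\alpha-\beta)/2-\pi/4\bigr).
\end{equation*}
A single derivative pins the maximum at $\beta_\star = \alpha/2 - \pi/4$, where both cosines equal $\cos(\alpha/4-\pi/8)$; hence $F=\cos^2(\alpha/4-\pi/8)$, which rearranges via $\cos^2\varphi=(1+\cos 2\varphi)/2$ and the angle-difference identity into the claimed $(2+\sqrt{2}\cos(\alpha/2)+\sqrt{2}\sin(\alpha/2))/4$.

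The trigonometric optimisation is routine. The delicate step I anticipate is rigorously justifying that the supremum over \emph{all} CPTP injection maps is achieved on a real qubit unitary rather than on some non-unital or dilating channel: one has to combine the classical-quantum structure of the outputs with Stinespring dilation, carefully tracking that the target state lives on Bob and the classical registers only, so that discarding the ancilla interacts correctly with the block decomposition of the fidelity. Once that reduction is accepted, the rest is elementary trigonometry.
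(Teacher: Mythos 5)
Your achievability computation is sound and agrees with the paper: the optimal injection is indeed a real rotation (the paper exhibits $V_A=\begin{pmatrix}\cos\theta&\sin\theta\\-\sin\theta&\cos\theta\end{pmatrix}$ with $\theta=-\alpha/4+\pi/8$, which matches your $\beta_\star=\alpha/2-\pi/4$ up to the half-angle convention), and $\cos^2(\alpha/4-\pi/8)$ does rearrange to the stated $\tfrac14(2+\sqrt2\cos(\alpha/2)+\sqrt2\sin(\alpha/2))$. The problem is the optimality direction. Your reduction of the maximization over all CPTP injection maps to real qubit unitaries does not go through as described. Monotonicity of the fidelity under partial trace runs the wrong way for your purpose: writing $V_A=\mathrm{Tr}_E\circ\mathrm{Ad}_W$ and $\tilde\rho=\mathrm{Tr}_E(\tilde\rho\otimes\tau_E)$ gives $F((\mathcal{A}\circ V_A)[\rho^+],\tilde\rho)\geq F((\mathcal{A}\otimes\mathrm{id}_E)[W\rho^+W^\dagger],\tilde\rho\otimes\tau_E)$, i.e.\ the dilated (isometric) figure of merit is a \emph{lower} bound on the channel one, so it cannot be used to upper-bound the supremum over channels by a supremum over unitaries. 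Concavity of the objective in the Choi matrix (which is what the conjugation-averaging step correctly exploits to restrict to real $C_{V_A}$) likewise does not localize the maximum at extreme points --- that would require convexity --- and even if it did, the extreme points of the set of qubit channels are not exhausted by unitaries (amplitude-damping-type channels with two Kraus operators are extremal). A related symptom: for a non-unital $V_A$ the four branch weights of $(\mathcal{A}(\alpha)\circ V_A)[\rho^+]$ are not all $1/4$ and the conditional Bob states are mixed, so your pure-overlap formula for $\sqrt{F}$ only covers the unitary case you are trying to reduce to.

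The paper closes exactly this gap by a different mechanism: it first applies Cauchy--Schwarz, $\bigl(\sum_i\sqrt{x_i}\bigr)^2\leq 4\sum_i x_i$, to replace the concave objective by the linear functional $\tfrac14\mathrm{Tr}(M(\alpha)C_{V_A})$ with $M(\alpha)=\ketbra{00}+\ketbra{11}+\ketbra{+_\alpha+}+\ketbra{-_\alpha-}$, then bounds the resulting SDP over all Choi matrices by exhibiting the dual-feasible point $L=\tfrac12\bigl(2+\sqrt{2(1+\sin\alpha)}\bigr)\id$ (verified through the characteristic polynomial of $N(\alpha)$), and finally checks that the rotation above saturates both the SDP value and the Cauchy--Schwarz equality condition (all four matrix elements equal). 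If you want to keep your route, you must either supply a genuine argument that no non-unitary CPTP map can beat the unitary optimum, or adopt the linearize-then-dualize step; as written, the upper bound is unproven.
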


\begin{proof}[Proof of Proposition 3]
The measurement fidelity between the two apparatuses is given by
\begin{equation}
     \mathcal{F}(\mathcal{A}(\alpha),\tilde{\mathcal{A}}) = \max_{V_A} F((\mathcal{A}(\alpha)\circ V_A)[\rho^+], \tilde{\rho})
\end{equation}
where in the qubit case, the injection map $V_A$ is a CPTP map between two Hilbert spaces of dimension two. 
Introducing $C_{V_A}=2V_A\left[\ketbra{\phi^+}\right]$, the Choï matrix of $V_A$, we have
\begin{equation}
\begin{split}
     &F((\mathcal{A}(\alpha)\circ V_A)[\rho^+], \tilde{\rho}) \\
     & = F(\mathcal{A}(\alpha)\left[V_A[\ketbra{\phi^+}]\otimes \frac{1}{2}(\ketbra{00}+\ketbra{11})\right], \tilde{\rho}) \\
     & = \frac{1}{16}\huge{\text{(}} \sqrt{\bra{00}C_{V_A}\ket{00}} + \sqrt{\bra{11}C_{V_A}\ket{11}}\\
     & \quad + \sqrt{\bra{+_\alpha +}C_{V_A}\ket{+_\alpha +}} + \sqrt{\bra{-_\alpha -}C_{V_A}\ket{-_\alpha -}} \huge{\text{)}}^2.\\
\end{split}
\end{equation}
Using the Cauchy-Schwartz inequality $(\sum_i \sqrt{x_i})^2 \leq n \sum_i x_i$, one can then upper bound the previous quantity by
\begin{equation}
\label{Fid}
\begin{split}
    F((\mathcal{A}&(\alpha)\circ V_A)[\rho^+], \tilde{\rho}) \\
    & \leq \frac{1}{4}\huge{\text{(}} \bra{00}C_{V_A}\ket{00} + \bra{11}C_{V_A}\ket{11}\\
    & \quad + \bra{+_\alpha +}C_{V_A}\ket{+_\alpha +} + \bra{-_\alpha -}C_{V_A}\ket{-_\alpha -} \huge{\text{)}}\\
    & \leq \frac{1}{4}\text{Tr}(M(\alpha)\cdot C_{V_A})
\end{split}
\end{equation}
where 
\begin{equation}
    M(\alpha) = \ketbra{00} + \ketbra{11} + \ketbra{+_{\alpha}+} + \ketbra{-_{\alpha}-}.
\end{equation}
Note that this upper bound on $F((\mathcal{A}(\alpha)\circ V_A)[\rho^+], \tilde{\rho})$ is tight whenever the four values $\bra{00}C_{V_A}\ket{00}$, $\bra{11}C_{V_A}\ket{11}$, $\bra{+_\alpha+}C_{V_A}\ket{+_\alpha+}$ and $\bra{-_\alpha-}C_{V_A}\ket{-_\alpha-}$ are equal. 

By taking the maximum over all possible injection maps $V_A$, we have 
\begin{equation}
    \mathcal{F}(\mathcal{A}(\alpha),\tilde{\mathcal{A}}) \leq \frac{1}{4} \max_{\substack{C_{V_A} \in L(\mathbb{C}^2\otimes \mathbb{C}^2) \\ \text{s.t.} \ \text{Tr}_{\mathbb{C}^2} (C_{V_A}) = \id_\mathbb{C}^2, \, C_{V_A} \succeq 0}}  \  \text{Tr}( M(\alpha)\cdot C_{V_A}) .
\end{equation}
The right-hand term is a semi-definite positive optimisation (SDP) the dual of which is given by
\begin{equation}
    \min_{\substack{L\in L(\mathbb{R}^2)\\ \text{s.t.} \ \id \otimes L \succeq M(\alpha)}}  \ \text{Tr}(L). \\
\end{equation}
Due to weak duality, any solution to this dual SDP is an upper bound on the primal SDP. One solution is given by
\begin{equation}
\label{Choice_L}
    L = \frac{1}{2} \left(2+\sqrt{2(1+\sin(\alpha)}\right) \id.
\end{equation}
Now we have $M(\alpha)-\id \otimes L = N(\alpha) - \sqrt{2(1+\sin(\alpha)} \, \id$ where $N(\alpha)$ is given by
\begin{equation}
    N(\alpha) = \begin{pmatrix}
        1 & \cos(\alpha) & 0 &  \sin(\alpha) \\
        \cos(\alpha) & -1 & \sin(\alpha) & 0 \\
        0 & \sin(\alpha) & -1 & -\cos(\alpha) \\
        \sin(\alpha) & 0 & -\cos(\alpha) & 1
    \end{pmatrix}.
\end{equation}
The eigenvalues of $M(\alpha)-\id \otimes L$ are thus the eigenvalues of $N(\alpha)$ translated by $-\sqrt{2(1+\sin(\alpha)}$. To show that $M(\alpha) -\id \otimes L \preceq 0$, it is sufficient to show the largest eigenvalue of $N(\alpha)$ is smaller than $\sqrt{2(1+\sin(\alpha)}$. This is the case as the characteristic polynomial of $N(\alpha)$ is given by 
\begin{equation}
    |N(\alpha)-X\id| = X^4 - 4X^2 + 2(1+\cos(2\alpha)).
\end{equation}
The roots of this polynomial are given by $X = \pm \sqrt{2(1 \pm \sin(\alpha))}$ and thus the largest eigenvalue of $N(\alpha)$ is indeed $\sqrt{2(1+\sin(\alpha)}$.\\
The value of the dual SDP for the choice of $L$ given in Eq.~\eqref{Choice_L} is $2+\sqrt{2(1+\sin(\alpha)}$ which upper bounds the primal SDP problem. Thus:
\begin{equation}
\begin{split}
    \mathcal{F}(\mathcal{A}(\alpha),\tilde{\mathcal{A}}) \leq & \frac{1}{4} (2+\sqrt{2(1+\sin(\alpha)}) \\
    & = \frac{1}{4}\left(2+\sqrt{2}\cos\left(\frac{\alpha}{2}\right) + \sqrt{2}\sin\left(\frac{\alpha}{2}\right)\right)
\end{split}.
\end{equation}
Note that this upper bound is actually tight as it can be reached with a unitary rotation $V_A = \begin{pmatrix}
    \cos(\theta) & \sin(\theta) \\
    - \sin(\theta) & \cos(\theta) 
\end{pmatrix}$ of angle $\theta = - \alpha/4 + \pi/8$ which verifies the equality condition of the Cauchy-Schwartz inequality, see~\eqref{Fid}.
\end{proof}

We now need to include the fact that $\mathcal{A}$ was used to obtain a CHSH value $S$. This is made possible by bounding the range of parameter $\alpha$ with respect to the $S$-value. 
\begin{Proposition}
    A Bell experiment involving the measurement apparatus $\mathcal{A}(\alpha)$ on Alice's side can only reach CHSH values up to $2\sqrt{2}\sin(\frac{\alpha}{2}+\frac{\pi}{4})$. \\
    Conversely, when reaching a CHSH value $S$ and Alice performs qubit measurements, the value of the measurement parameter $\alpha$ of Alice's apparatus belongs to
    \begin{equation}
    \alpha \in \left[2 \arcsin \left(\frac{S}{2\sqrt{2}}\right) - \frac{\pi}{2} , \frac{3\pi}{2} - 2 \arcsin \left(\frac{S}{2\sqrt{2}}\right)\right].
\end{equation}
\end{Proposition}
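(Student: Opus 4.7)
My plan is to bound the operator norm of the CHSH Bell operator once Alice's apparatus is fixed, and then invert the resulting trigonometric inequality. For Alice's qubit apparatus $\mathcal{A}(\alpha)$ of Eq.~\eqref{eq:qubitchannel}, the two local $\pm 1$-valued observables are $A_0 = \sigma_z$ (for $x=0$) and $A_1 = |+_\alpha\rangle\langle +_\alpha| - |-_\alpha\rangle\langle -_\alpha| = \cos(\alpha)\sigma_z + \sin(\alpha)\sigma_x$ (for $x=1$). For an arbitrary bipartite state $\rho$ and arbitrary $\pm 1$-valued observables $B_0, B_1$ on Bob's side (of arbitrary dimension), I would group the CHSH operator by Bob's input rather than Alice's,
\begin{equation}
    \mathcal{B} = (A_0+A_1) \otimes B_0 + (A_0-A_1) \otimes B_1,
\end{equation}
so that the only quantities that need to be controlled on Alice's side are the two fixed qubit operators $A_0 \pm A_1$.

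Since each $B_y$ is the difference of two POVM elements summing to the identity, $\|B_y\| \leq 1$ in any dimension, and the triangle inequality combined with submultiplicativity of the operator norm under tensor products yields
\begin{equation}
    |S| = |\mathrm{Tr}(\rho\,\mathcal{B})| \leq \|\mathcal{B}\| \leq \|A_0+A_1\| + \|A_0-A_1\|.
\end{equation}
A direct diagonalization shows that $A_0 \pm A_1 = (1\pm\cos\alpha)\sigma_z \pm \sin(\alpha)\sigma_x$ has spectral radius $\sqrt{(1\pm\cos\alpha)^2 + \sin^2\alpha} = \sqrt{2 \pm 2\cos\alpha}$, which by the half-angle identities simplifies to $2\cos(\alpha/2)$ and $2\sin(\alpha/2)$, both nonnegative on $\alpha \in [0,\pi]$. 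Combining these gives
\begin{equation}
    S \leq 2\cos(\alpha/2) + 2\sin(\alpha/2) = 2\sqrt{2}\,\sin\!\left(\tfrac{\alpha}{2} + \tfrac{\pi}{4}\right),
\end{equation}
which is the forward direction of the proposition.

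For the converse, I would simply invert this monotonicity relation. On $\alpha \in [0,\pi]$ the argument $\alpha/2 + \pi/4$ ranges over $[\pi/4, 3\pi/4]$, where $\sin$ is concave with its unique maximum at $\pi/2$; the inequality $\sin(\alpha/2+\pi/4) \geq S/(2\sqrt{2})$ is therefore equivalent to $\alpha/2 + \pi/4 \in [\arcsin(S/(2\sqrt{2})),\,\pi - \arcsin(S/(2\sqrt{2}))]$, which after solving for $\alpha$ yields exactly the interval stated. I do not anticipate a genuine obstacle: the only nontrivial step is the operator-norm bound, which becomes elementary once one groups by Bob's input so that solely the fixed qubit operators $A_0 \pm A_1$ need to be analyzed; the remainder is routine trigonometry, and the arbitrariness of Bob's Hilbert space is handled transparently by $\|B_y\| \leq 1$.
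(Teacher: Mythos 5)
Your proof is correct, and it takes a genuinely different route from the paper's. The paper parametrizes Bob's apparatus as a qubit measurement $\mathcal{B}(\beta)$, forms the full $4\times 4$ Bell operator $M_{\alpha,\beta}$, computes its square $M_{\alpha,\beta}^2 = 4(1+s_\alpha s_\beta\,\sigma_y\otimes\sigma_y)$, and reads off the largest eigenvalue $2\sqrt{1+\sin\alpha\sin\beta}\leq 2\sqrt{1+\sin\alpha}$. You instead group the Bell operator by Bob's input and apply the triangle inequality, $\|\mathcal{B}\|\leq\|A_0+A_1\|+\|A_0-A_1\|=2\cos(\alpha/2)+2\sin(\alpha/2)$, which is numerically the same bound since $\left(\cos(\alpha/2)+\sin(\alpha/2)\right)^2=1+\sin\alpha$. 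Your approach buys two things: it is more elementary (no spectral computation of the joint operator), and it handles Bob's side in arbitrary dimension directly via $\|B_y\|\leq 1$, without needing Jordan's lemma or a qubit reduction on Bob's side. What the paper's computation buys in exchange is the exact spectrum of $M_{\alpha,\beta}$, and hence a proof that the bound $2\sqrt{2}\sin(\alpha/2+\pi/4)$ is \emph{attained} (with $\ket{\phi^+}$ and suitably rotated Bob observables); this tightness is not needed for Proposition 4 as stated, but it is what ultimately justifies the equality, rather than a mere inequality, in the final formula for $\mathcal{F}_{\mathsf{m}}(S)$ in Proposition 1, so it is worth noting that your argument establishes only the upper-bound half of that story. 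The inversion of the monotone relation in the converse direction is handled identically in both proofs.
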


\begin{proof}[Proof of Proposition 4] The realization of the CHSH scenario with qubits can be parametrized in the following way (up to local unitaries): The shared state is some density matrix $\rho_{AB}$, Alice performs $\mathcal{A}(\alpha)$ and Bob $\mathcal{B}(\beta)$, with $\alpha,\beta \in [0,\pi)$. The CHSH value is given by 
\begin{equation}
    S(\alpha,\beta,\rho_{AB}) = \text{Tr}(M_{\alpha,\beta} \, \rho_{AB}),
\end{equation}
where the operator $M_{\alpha,\beta}$ is 
\begin{equation}
\begin{split}
     M_{\alpha,\beta} &= \sigma_z \otimes \sigma_z + \sigma_z \otimes (c_\beta \sigma_z + s_\beta \sigma_x)\\
    & + (c_\alpha \sigma_z + s_\alpha \sigma_x) \otimes \sigma_z  \\
    &- (c_\alpha \sigma_z + s_\alpha \sigma_x) \otimes (c_\beta \sigma_z + s_\beta \sigma_x).
\end{split}
\end{equation}
In the previous equation, we defined $c_\alpha=\cos(\alpha)$, $s_\alpha=\sin(\alpha)$ and similarly for $c_\beta$ and $s_\beta$.

Since $M_{\alpha,\beta}$ is a real symmetric matrix, the best value of $S(\alpha,\beta,\rho_{AB})$ for a given $\alpha,\beta$ is the largest eigenvalue of $M_{\alpha,\beta}$. First of all, note that
\begin{equation}
    M_{\alpha,\beta}^2 = 4 (1+s_\alpha s_\beta \, \sigma_y \otimes \sigma_y).
\end{equation}
Now since $\sigma_y \preceq \id$, we have $0 \preceq M_{\alpha,\beta}^2 \preceq 4(1+s_\alpha s_\beta)$ and any of the eigenvalues $\lambda$ of $M_{\alpha,\beta}^2$ is smaller than this upper bound. Since $M_{\alpha,\beta}$ can be diagonalized, its eigenvalues are of the form $\pm \sqrt{\lambda}$. Its largest eigenvalue is thus smaller than $2 \sqrt{1+\sin(\alpha)\sin(\beta)} \leq 2\sqrt{1+\sin(\alpha)}$. Finally, the largest possible CHSH value that one can obtain for a given angle $\alpha$ is upper bounded by 
\begin{equation}
    S(\alpha,\beta,\rho_{AB}) \leq 2\sqrt{1+\sin(\alpha)} = 2\sqrt{2}\sin\left(\frac{\alpha}{2}+\frac{\pi}{4}\right).
    \label{CHSH_bound}
\end{equation}
Note that this upper bound is tight as it can be obtained with the state $\rho_{AB}=\ketbra{\phi^+}$, measurement apparatus $\mathcal{A}(\alpha)$ for Alice and Bob performing measurements with $B_0 = \cos(\alpha/2)\sigma_z + \sin(\alpha/2) \sigma_x$ and $B_1 = \sin(\alpha/2)\sigma_z + \cos(\alpha/2) \sigma_x$. 

This upper bound implies that if $S(\alpha,\beta,\rho_{AB}) \geq S$, we have $\sin(\alpha/2 + \pi/4) \geq \frac{S}{2\sqrt{2}}$. Hence, the parameter $\alpha$ can only take values in
\begin{equation}
    \alpha \in \left[2 \arcsin \left(\frac{S}{2\sqrt{2}}\right) - \frac{\pi}{2} , \frac{3\pi}{2} - 2 \arcsin \left(\frac{S}{2\sqrt{2}}\right)\right].
\end{equation}
\end{proof}

\begin{proof}[Proof of Proposition 1] Proposition 2 allows one to reduce to the case where the $\mathcal{A}$ input state is a qubit. In this case, it can be described by a single parameter $\alpha$ as in Eq.~(\ref{eq:qubitchannel}). The range of $\alpha$ can be restricted with the $S$-value using Proposition 4. Since the maximal fidelity of any $\mathcal{A}(\alpha)$ is given by $\mathcal{F}(\alpha)$ (see Proposition 3), and this function is symmetric under $\alpha \to \pi -\alpha$ and increasing on $[0,\pi/2]$, the minimal measurement fidelity value for a CHSH value $S$ is attained at the lowest possible $\alpha$ in $[0,\pi/2]$. This can be computed explicitly and leads to 
\begin{equation}
    \mathcal{F}_\mathsf{m} (S) = \frac{ \sqrt{2} \, S +4}{8}.
    \label{Uhl_bound}
\end{equation}
\end{proof}

\section{Tomographic Fidelity of Measurements}
\label{app:measurement}

The fidelity of measurements introduced in the previous section in the self-testing scenario can be adapted in the tomographic context. In that case, two main assumptions are made: the dimension of the Hilbert spaces are trusted and the choice of the local basis is fixed. Those translate to operational assumptions of the measurement fidelity: the physical map $\mathcal{A}$ takes a qubit $\sigma \in \mathbb{C}^2$ as input and the injection map is set to the identity ($V_A = \id_{\mathcal{H}_A}$). The action of the physical measurement apparatus can thus be described as:
\begin{equation}
    \mathcal{A}[\sigma \otimes \ketbra{x}_{\text{in}}] = \sum_{a=0,1} \text{Tr}(M_{a,x}\sigma) \otimes \ketbra{a}_{\text{out}}
\end{equation}
where for each input $x$ the operators $\{M_{a,x}\}_k$ encode a POVM measurement. In the ideal case (uniquely reaching $S=2\sqrt{2}$), those POVMs are projective measurements $\ketbra{u_{a,x}}$ on states $\ket{0}$ and $\ket{1}$ for input $x=0$, and states $\ket{+}$ and $\ket{-}$ for input $x=1$. The measurement fidelity is given by:
\scriptsize
\begin{equation}
    \begin{split}
        &\mathcal{F} (\mathcal{A}, \tilde{\mathcal{A}}) = F(\mathcal{A}[\rho^+], \tilde{\mathcal{A}}[\rho^+])\\
        & = \left(\text{Tr} \sqrt{\sqrt{\tilde{\mathcal{A}}[\rho^+]}\mathcal{A}[\rho^+]\sqrt{\tilde{\mathcal{A}}[\rho^+]}}\right)^2 \\
        & = \frac{1}{4}\left(\text{Tr} \sqrt{\tilde{\mathcal{A}}[\rho^+]\mathcal{A}[\rho^+]\tilde{\mathcal{A}}[\rho^+]}\right)^2 \\
        & = \frac{1}{16}\left(\text{Tr} \sqrt{\sum_{a,x=0,1} \bra{u_{a,x}} M_{a,x} \ket{u_{a,x}} \ketbra{u_{a,x}} \otimes \ketbra{ax} }\right)^2 \\
        & = \frac{1}{16}\left(\text{Tr} \sum_{a,x=0,1} \sqrt{\bra{u_{a,x}} M_{a,x} \ket{u_{a,x}}}\ketbra{u_{a,x}} \otimes \ketbra{ax} \right)^2 \\
        & = \frac{1}{16}\left(\sum_{a,x=0,1} \sqrt{\bra{u_{a,x}} M_{a,x}\ket{u_{a,x}}} \right)^2
    \end{split}
\end{equation}
\normalsize
Note that all quantities in the above expression can be directly related to probabilities of measurements accessible in the laboratory by state preparation:
\begin{equation}
\begin{split}
    \mathbb{P}_\mathsf{c}(a|\ket{u_{a,x}}, x) & = \bra{a} \mathcal{A}[\ketbra{u_{a,x}}\otimes \ketbra{x}] \ket{a} \\
    & = \bra{u_{a,x}} M_{a,x}\ket{u_{a,x}}
\end{split}
\end{equation}
This probability corresponds to obtaining the expected outcome of the ideal apparatus, knowing the input state. Using the above formula:
\begin{equation} \label{eq:tomographic_fid}
    \mathcal{F} (\mathcal{A}, \tilde{\mathcal{A}}) = \frac{1}{16}\left(\sum_{a,x=0,1} \sqrt{\mathbb{P}_\mathsf{c}(a|\ket{u_{a,x}}, x}) \right)^2.
\end{equation}

While the above quantity can be accessible in theory, we only have access in practice to the probabilities $\mathbb{P}_\mathsf{r}$ and $\mathbb{P}_\mathsf{z}$ corresponding respectively to the $\pi/2$-rotation on the Bloch sphere and the measurement in the $\sigma_z$-basis. Those quantities can be directly related to the fidelity of some physical operation with respect to an ideal one in the tomographic context. As such 
\begin{subequations}
    \begin{equation}
        \mathcal{F}(\mathcal{R}, \tilde{\mathcal{R}_\frac{\pi}{2}}) = \frac{1}{4} \left( \sum_{x=0,1} \sqrt{\mathbb{P}_\mathsf{r}(\ket{u_{1,x}}| \ket{x})} \right)^2
    \end{equation}
    \begin{equation}
        \mathcal{F}(\mathcal{M}, \tilde{\mathcal{M}_z}) = \frac{1}{4} \left( \sum_{a=0,1} \sqrt{\mathbb{P}_\mathsf{z}(a| \ket{a})} \right)^2
    \end{equation}
\end{subequations}
where the operators $\mathcal{R}$ and $\mathcal{M}$ describe the physical (imperfect) implementation of the basis rotation and the $\sigma_z$ measurement. For example, when considering that the readout error is symmetric on all possible states prepared, i.e. $\mathbb{P}_\mathsf{z}(a'| \ket{a}) = \mathbb{P}_\mathsf{z}(a| \ket{a'})$, we obtain that the $\sigma_z$-fidelity is $\mathcal{F}(\mathcal{M}, \tilde{\mathcal{M}_z}) = \mathbb{P}_\mathsf{z}(0| \ket{0})$.

The overall physical apparatus $\mathcal{A}$ is obtained by performing the basis rotation only when the input is $x=1$ and in all cases performing a $\sigma_z$ measurement afterwards. The probabilities $\mathbb{P}_\mathsf{c}$ therefore directly relate to the probabilities $\mathbb{P}_\mathsf{r}$ and $\mathbb{P}_\mathsf{z}$. In particular, when the rotation is performed:
\begin{equation}
\begin{split}
    \mathbb{P}_\mathsf{c}(0|\ket{+}, 1) = \mathbb{P}&_\mathsf{r}(\ket{0}| \ket{+})\mathbb{P}_\mathsf{z}(0| \ket{0}) \\
    & + \mathbb{P}_\mathsf{r}(\ket{1}| \ket{+})\mathbb{P}_\mathsf{z}(0| \ket{1}) \\
    & - 2 (\bra{0} \mathcal{R}[\ketbra{+}] \ket{1})(\bra{0} \mathcal{M}[\bra{0}\ket{1}] \ket{0}) 
\end{split}
\end{equation}
The last term of the above equation can be bounded in absolute value using the CPTP property of channels $\mathcal{R}$ and $\mathcal{M}$:
\begin{subequations}
    \begin{equation}
        |\bra{0} \mathcal{R}[\ketbra{+}] \ket{1}|^2 \leq \mathbb{P}_\mathsf{r}(\ket{0}| \ket{+}) \mathbb{P}_\mathsf{r}(\ket{1}| \ket{+})
    \end{equation}
    \begin{equation}
        |\bra{0} \mathcal{M}[\ket{0}\ket{1}] \ket{0}|^2 \leq \mathbb{P}_\mathsf{z}(0| \ket{0}) \mathbb{P}_\mathsf{z}(1| \ket{0})
    \end{equation}
\end{subequations}
If one supposes that the measured physical errors, $\mathbb{P}_\mathsf{r}(\ket{1}| \ket{+}):= \varepsilon_\mathsf{r}$ and $ \mathbb{P}_\mathsf{z}(1| \ket{0}):=\varepsilon_\mathsf{z}$, are small, then the overall probability of having the correct output, when state $\ket{+}$ is prepared and input $x=1$ is given, can be lower bounded at first order by:
\begin{equation}
    \mathbb{P}_\mathsf{c}(0|\ket{+}, 1) \geq 1-\varepsilon_\mathsf{r}-\varepsilon_\mathsf{z} - 2\sqrt{\varepsilon_\mathsf{r}}\sqrt{\varepsilon_\mathsf{z}}
\end{equation}
If one makes the extra assumptions that all errors are symmetric and that performing no rotation leads to smaller errors, then this quantity lower bounds all other probabilities $\mathbb{P}_\mathsf{c}(a|\ket{u_{a,x}}, x)$. Taking this into account in Eq.~(\ref{eq:tomographic_fid}) grants
\begin{equation}
    \mathcal{F} (\mathcal{A}, \tilde{\mathcal{A}}) \geq \mathbb{P}_\mathsf{c}(0|\ket{+}, 1).
\end{equation}
Combining the two previous equations gives a lower bound on the overall tomographic measurement fidelity
\begin{equation}
    \mathcal{F} (\mathcal{A}, \tilde{\mathcal{A}}) \geq 1-\varepsilon_\mathsf{r}-\varepsilon_\mathsf{z} - 2\sqrt{\varepsilon_\mathsf{r}}\sqrt{\varepsilon_\mathsf{z}}.
\end{equation}

Note that in the case of symmetric errors, $\varepsilon_\mathsf{z}$ can be directly related to the readout fidelities presented in Appendix~\ref{app:calibrations}.a by the following formula:
\begin{equation}
    \varepsilon_\mathsf{z, A(B)} = \frac{1}{2}(1-F_{r,A(B)}) 
\end{equation}
To obtain a bound valid on both parties, one can always assume $\varepsilon_\mathsf{z} = \min\{\varepsilon_\mathsf{z, A},\varepsilon_\mathsf{z, B}\}$. As the measured error probabilities in the experimental setup are $\varepsilon_\mathsf{r} = 0.25\%$ and $\varepsilon_\mathsf{z}=1.4\%$, we obtain a lower-bound tomographic measurement fidelity $\mathcal{F}(\mathcal{A}, \tilde{\mathcal{A}}) \geq 97.2\%$.

\section{Timing Verification}
\label{app:timing}
In this experiment, we follow the characterization of the space-time configuration discussed in Ref.~\cite{Storz2023a}, relevant for ensuring that the locality loophole is closed through space-like separation of the start and stop events of each trial.
As mentioned in the main text, and similarly to Ref.~\cite{Storz2023}, we define the start time $t_\star$ of each trial of the self-testing protocol as the moment when the local random number generators start generating a random input bit. The stop time $t_\times$ marks the moment when the measurement signal arrives at the input of the analog-to-digital converter (ADC), which is part of the trusted laboratory. The locality loophole is closed when the following conditions hold: the basis choice at node A must not be able to influence the measurement outcome at node B using influences propagating at most at the speed of light in vacuum $c$, and vice versa. In order to ensure that, we must precisely know where in space and when in time these events happen.

We determine the spatial distances between the locations corresponding to the start event at one node and the stop event at the other node, as discussed in Ref.~\cite{Storz2023}. We measure the shorter of these two distances to be $d=32.928\,\rm{m}$, corresponding to an available time budget for each trial of $t_d=d/c=109.83$~ns, during which classical communication between the two nodes is forbidden by the laws of special relativity. The standard deviation on this value is on the order of ten picoseconds. We note that $d$ is a few centimeters larger than in Ref.~\cite{Storz2023} because of a minor rearrangement of the room-temperature electronics.

We further measure the actual duration of a trial in a scheme outlined in Ref.~\cite{Storz2023a}. We conclude that the duration of the protocol was no longer than $t_{\mathrm{protocol}}=106.7$~ns, and therefore $t_d-t_{\mathrm{protocol}}=3.1$~ns shorter than the time budget imposed by the space-like separation between the start and stop events at the two nodes. The standard deviation of the mean value is on the order of a few hundred picoseconds. We note that this margin is about 50$\%$ higher than for the loophole-free Bell test presented on the same setup \cite{Storz2023}, constituting about 2.8$\%$ of the available time budget. The margin is therefore in a similar regime as for related experiments \cite{Hensen2015,Giustina2015,Shalm2015,Li2018i}.

\bibliography{QudevRefDB,references}
\end{document}